\newtheorem{theorem}{Theorem}[section]
\newtheorem{lemma}[theorem]{Lemma}
\newtheorem{remark}[theorem]{Remark}
\newcommand{\summ}{\displaystyle\sum}
\newcommand{\F}{\mathbb{F}}
\newcommand{\E}{\mathbb{E}}
\newcommand{\Var}{\mathrm{Var}}
\newcommand{\Cov}{\mathrm{Cov}}
\newcommand{\calC}{{\mathcal {C}}}
\def\polylog{\operatorname{polylog}}
\newcommand{\poly}{\mathsf{poly}}
\newcommand{\ba}{\mathbf{a}}
\newcommand{\deltaC}{\delta_{\mathcal C}}
\begin{document}
\title{Decoding Reed-Muller codes over product sets}
\author{John Kim\thanks{Department of Mathematics, Rutgers University. {\tt jonykim@math.rutgers.edu } Research supported by the National Science
Foundation Graduate Research Fellowship under Grant No.  DGE-1433187.
} \and
Swastik Kopparty\thanks{Department of Mathematics \& Department of Computer Science, Rutgers University. {\tt swastik@math.rutgers.edu }
Research supported in part by a Sloan Fellowship and NSF grant CCF-1253886.}}

\maketitle

\begin{abstract}
We give a polynomial time algorithm to decode multivariate polynomial codes of degree $d$ up to half their minimum distance, when the evaluation points are an arbitrary product set $S^m$, for every $d < |S|$. Previously known algorithms can achieve this only if the set $S$ has some very special algebraic structure, or if the degree $d$ is significantly smaller than $|S|$. We also give a near-linear time randomized algorithm, which is based on tools from list-decoding, to decode these codes from nearly half their minimum distance, provided $d < (1-\epsilon)|S|$ for constant $\epsilon > 0$.
 
Our result gives an $m$-dimensional generalization of the well
known decoding algorithms for Reed-Solomon codes, and can be viewed as giving an algorithmic
version of the Schwartz-Zippel lemma.
\end{abstract}

%
%
%

\section{Introduction}

Error-correcting codes based on polynomials have played an important role
throughout the history of coding theory. The mathematical phenomenon
underlying these codes is that distinct low-degree polynomials have different
evaluations at many points. More recently, the intimate relation between
polynomials and computation has led to polynomial-based error-correcting codes
having a big impact on complexity theory. Notable applications include
PCPs, interactive proofs, polynomial identity testing and property testing.

Our main result is a decoding algorithm for multivariate polynomial codes.
Let $\F$ be a field, let $S \subseteq \F$, let $d < |S|$ and let $m \geq 1$.
Consider the code of all $m$-variate polynomials of total
degree at most $d$, evaluated at all points of $S^m$:
$$ \mathcal C = \{   \langle P(\ba) \rangle_{\ba \in S^m} \mid P(X_1, \ldots, X_m) \in \F[X_1, \ldots, X_m], \deg(P) \leq d \}.$$
When $m = 1$, this code is known as the Reed-Solomon code~\cite{RS}, and for 
$m > 1$ this code is known as the Reed-Muller code~\cite{Muller,Reed}\footnote{The family of
Reed-Muller codes also includes polynomial evaluation codes where the individual degree $d$ is
larger than $|S|$, and the individual degree is capped to be at most $|S|-1$. We do not consider the 
$d \geq |S|$ case in this paper.}.

The code $\mathcal C$ above is a subset of $\F^{S^m}$, which we view as the space of functions
from $S^m$ to $\F_q$. Given two functions $f, g : S^m \to \F_q$,
we define their (relative Hamming) distance $\Delta(f, g) = \Pr_{\ba \in S^m} [ f(\ba) \neq g(\ba)].$
To understand the error-correcting properties of $\mathcal C$,
we recall the following well known lemma, often called the Schwartz-Zippel lemma:
\begin{lemma}
Let $\F$ be a field, and let $P(X_1, \ldots, X_m)$ be a nonzero polynomial over $\F$ with 
degree at most $d$. Then for every $S \subseteq \F$,
$$ \Pr_{\mathbf{a} \in S^m } [ P(\mathbf{a}) = 0 ] \leq \frac{d}{|S|} .$$
\end{lemma}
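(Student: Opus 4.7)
The plan is to prove this by induction on the number of variables $m$, which is the classical approach to Schwartz–Zippel.

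For the base case $m = 1$, I would invoke the basic fact that a nonzero univariate polynomial of degree at most $d$ over a field has at most $d$ roots. Thus the number of $a \in S$ with $P(a) = 0$ is at most $d$, giving $\Pr_{a \in S}[P(a) = 0] \leq d/|S|$.

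For the inductive step, assume the statement for $m-1$ variables. Given a nonzero $P(X_1, \ldots, X_m)$ of degree at most $d$, I would write it as a polynomial in $X_m$ with coefficients in $\F[X_1, \ldots, X_{m-1}]$:
\[
P(X_1, \ldots, X_m) = \sum_{i=0}^{k} X_m^i \cdot P_i(X_1, \ldots, X_{m-1}),
\]
where $k$ is chosen so that $P_k$ is nonzero. Then $k \leq d$ and $\deg(P_k) \leq d - k$. Now I would split the event $\{P(\mathbf{a}) = 0\}$ according to whether $P_k(a_1, \ldots, a_{m-1}) = 0$ or not, and apply a union bound.

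The first case, $P_k(a_1, \ldots, a_{m-1}) = 0$, is bounded by the inductive hypothesis applied to $P_k$: at most $(d-k)/|S|$. In the second case, conditioning on $(a_1, \ldots, a_{m-1})$ with $P_k(a_1, \ldots, a_{m-1}) \neq 0$, the polynomial $P(a_1, \ldots, a_{m-1}, X_m)$ is a nonzero univariate polynomial of degree exactly $k$ in $X_m$, so the base case bounds the probability over $a_m \in S$ by $k/|S|$. Summing gives $(d-k)/|S| + k/|S| = d/|S|$, completing the induction.

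The argument is entirely routine; there is no real obstacle beyond correctly handling the decomposition and ensuring that $P_k$ is indeed nonzero of the asserted degree. The key design choice is to induct on the number of variables rather than total degree, and to isolate the leading coefficient in $X_m$ so that both subcases yield bounds that sum cleanly to $d/|S|$.
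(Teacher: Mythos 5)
Your proof is correct: it is the classical induction-on-variables argument for the Schwartz--Zippel lemma, isolating the leading coefficient in $X_m$ and combining the inductive bound $(d-k)/|S|$ with the univariate bound $k/|S|$ via a union bound. The paper itself states this lemma as a well-known fact without proof, so there is no alternative argument in the paper to compare against; your write-up fills that gap in the standard way.
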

This lemma implies that for any two polynomials $P, Q$ of degree at most $d$,
$\Delta(P, Q) \geq (1 - \frac{d}{|S|})$. In other words
the minimum distance of $\mathcal C$ is at least $( 1 - \frac{d}{|S|})$.
It turns out that the minimum distance of $\mathcal C$ is in fact exactly $( 1 - \frac{d}{|S|})$,
and we let $\deltaC$ denote this quantity.

For error-correcting purposes, if we are given a ``received word" $r: S^m \to \F_q$
such that there exists a polynomial $P$ of degree at most $d$ with 
$\Delta(r, P) \leq \deltaC/2$, then we know that there is a unique such $P$.
The problem that we consider in this paper, ``decoding $\mathcal C$
upto half its minimum distance'', is the algorithmic task of finding this $P$.

\subsection{Our Results}

There is a rich history with several deep algebraic ideas surrounding the problem of
decoding multivariate polynomial codes. We first state our main results, and
then discuss its relationship to the various other known results.

\begin{theorem}[Efficient decoding of multivariate polynomial codes upto half their minimum distance]
\label{thm:intromain}
Let $\F$ be a finite field, let $S, d, m$ be as above, and let $\deltaC = (1 - \frac{d}{|S|})$.

There is an algorithm, which when given as input a function $r: S^m \to \F$,
runs in time $\poly(|S|^m , \log |\F|)$
finds the polynomial $P(X_1, \ldots, X_m)\in \F[X_1, \ldots, X_m]$ of degree at most $d$ (if any) such that:
$$\Delta(r, P) < \deltaC/2.$$
\end{theorem}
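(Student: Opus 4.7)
The plan is to prove the theorem by induction on the number of variables $m$, with the base case $m = 1$ being the classical Berlekamp-Welch algorithm for Reed-Solomon codes. The entire decoding procedure will be linear-algebraic, so polynomial running time will follow immediately from correctness.

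The most natural approach to the inductive step is to decode each ``slice'' $r_a(x_2, \ldots, x_m) := r(a, x_2, \ldots, x_m)$ on $S^{m-1}$ (for $a \in S$) using the recursive algorithm; by the minimum-distance bound, the recursive call returns the true slice $P(a, \cdot)$ whenever $\Delta(r_a, P(a, \cdot)) < \deltaC/2$, and otherwise returns failure. Writing $P(X_1, \ldots, X_m) = \sum_{\alpha} c_\alpha(X_1) \cdot X_2^{\alpha_2} \cdots X_m^{\alpha_m}$ with $\deg c_\alpha \leq d - |\alpha|$, the successful slices give noisy univariate evaluations of each $c_\alpha$ on $S$, with errors supported on the set $B \subseteq S$ of failed slices. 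A final application of univariate Berlekamp-Welch to each $c_\alpha$ then reconstructs $P$.

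The key obstacle is bounding $|B|$. A straightforward averaging of slice-errors against the total error budget only yields $|B| < |S|$, whereas the final univariate Berlekamp-Welch step on $c_0$ (a polynomial of degree up to $d$) requires $|B| < (|S|-d)/2$; this gap can be arbitrarily large when $d/|S|$ is bounded away from $1$. To close it I would move from the naive two-stage pipeline to a global multivariate Berlekamp-Welch system, searching for a nonzero error locator $E$ and a polynomial $N$ of appropriate total degree satisfying $E(\ba) \cdot r(\ba) = N(\ba)$ on all of $S^m$; correctness (namely $E \cdot P = N$ as polynomials, so that $P = N/E$) would then reduce to a multivariate Schwartz-Zippel argument applied to $E \cdot P - N$, whose total degree is kept below $|S|$.

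The real difficulty is designing the shape of $(E, N)$ so that the linear system admits a nonzero solution whenever $\Delta(r, P) < \deltaC/2$: a naive parameter count for polynomials of bounded total degree fails for $m \geq 2$, since the space of low-degree $m$-variate polynomials is far too small to accommodate an arbitrary error pattern. I would try to exploit the product structure of $S^m$ directly by allowing $E$ to take a constrained form---for instance, a product of univariate factors, one per coordinate, whose individual degrees capture the ``axis-parallel'' structure of the error set. Making this structural choice together with the accompanying existence argument work uniformly across the full range $d < |S|$, perhaps via an outer loop that iteratively refines the candidate error locator using information from the recursive slice decodings, is where I expect the bulk of the proof to live.
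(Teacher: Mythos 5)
There is a genuine gap here: you correctly diagnose that the naive slice-and-recurse pipeline fails (averaging only gives $|B| < |S|$ failed slices, far more than the $(|S|-d)/2$ that a final univariate decoding of a degree-$d$ coefficient can tolerate), but your proposed repair is not carried out, and you say yourself that the existence argument for the structured error locator is ``where the bulk of the proof will live.'' Moreover, the direction you gesture at is unlikely to work as stated. An error pattern of weight up to $\frac{1}{2}(1-\frac{d}{n})n^m$ is a constant fraction of all of $S^m$ when $d/n$ is bounded away from $1$, and it can be spread so that it is not covered by any small union of axis-parallel hyperplanes; a product-form locator $E_1(X_1)\cdots E_m(X_m)$ vanishes exactly on such a union, and since each hyperplane contains only $n^{m-1}$ points, covering the error set forces $\sum_j \deg E_j = \Omega(n)$. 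Then $\deg(E\cdot P - N) \geq \Omega(n) + d$ exceeds $n$, and the Schwartz--Zippel step that is supposed to give $E\cdot P = N$ (hence $P = N/E$) collapses. So neither the two-stage pipeline nor the global Berlekamp--Welch variant, as described, yields the theorem; the ``outer loop that iteratively refines the locator'' is exactly the missing idea, not a routine detail.

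For comparison, the paper closes this gap by a quite different mechanism: it never builds a multivariate error locator. Writing $C(\bm{X},Y)=\sum_{i=0}^{d}P_i(\bm{X})Y^{d-i}$ with $\deg P_i \le i$, it recovers the coefficients one at a time, starting from the degree-$0$ coefficient $P_0$. At stage $i$ it decodes each line $r_i(\bm{x},\cdot)$ to the nearest degree-$(d-i)$ polynomial, extracts the $Y^{d-i}$ coefficient together with an uncertainty proportional to the decoding distance, and feeds this weighted word into a recursive $(m-1)$-variate Reed--Muller decoder \emph{with uncertainties} (GMD-style, using the triangle inequality for weighted distance). The degrees trade off exactly: the column decodings are to degree $d-i$ (radius $(n-d+i)/2$) while the coefficient being recovered has degree only $i$, and the inequality $\frac{n-d}{n-d+i}\le\frac{n-i}{n}$ shows the weighted error in the coefficient word stays below half its minimum distance --- precisely the budget problem your slice-based approach could not overcome. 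If you want to salvage your outline, the soft-information, coefficient-by-coefficient recursion is the ingredient to import; the error-locator route would require a genuinely new idea to survive the degree accounting above.
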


As we will discuss below, previously known efficient decoding algorithms for these codes only either worked
for (1) very algebraically special sets $S$, or (2) very low degrees $d$, or (3) decoded from a much smaller fraction
of errors ($\approx \frac{1}{m+1} \deltaC$ instead of $\frac{1}{2} \delta_C$). 

Using several further ideas, we also show how to implement the above algorithm in near-linear time to decode upto
almost half the minimum distance, provided $d$ is not $(1-o(1)) |S|$.
\begin{theorem}[Near-linear time decoding]
Let $\F$ be a finite field, let $S, d, m$ be as above, and let $\deltaC = (1 - \frac{d}{|S|})$. Assume
$\delta_C > 0$ is a constant.

There is a randomized algorithm, which when given as input a function $r: S^m \to \F$,
runs in time $|S|^m \cdot\poly( \log |S|^m, \log |\F|)$
finds the polynomial $P(X_1, \ldots, X_m)\in \F[X_1, \ldots, X_m]$ of degree at most $d$ (if any) with:
$$\Delta(r, P) < (1 - o(1)) \cdot \deltaC/2.$$
\end{theorem}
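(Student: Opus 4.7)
The plan is to leverage two near-linear time subroutines---fast Reed--Solomon unique decoding and fast Reed--Solomon list decoding---along axis-parallel lines. The restriction of any polynomial $P$ of total degree at most $d$ to an axis-parallel line of $S^m$ is a univariate polynomial of degree at most $d$, so each line yields an RS decoding instance of length $|S|$ at relative distance $\deltaC$. Summed over the $|S|^{m-1}$ lines in a single direction, this already fits within the target $|S|^m \cdot \polylog(|S|^m, \log |\F|)$ budget.

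Concretely, I would recurse on the number of variables. For each $x_1 \in S$, set $r_{x_1}(X_2, \ldots, X_m) := r(x_1, X_2, \ldots, X_m)$, with target $P_{x_1}(X_2, \ldots, X_m) := P(x_1, X_2, \ldots, X_m)$ of degree at most $d$. Recursively decode each slice $r_{x_1}$ to a candidate $Q_{x_1}$. Then for each column $(x_2, \ldots, x_m) \in S^{m-1}$ the values $\{Q_{x_1}(x_2, \ldots, x_m)\}_{x_1 \in S}$ ought to form an RS codeword in $X_1$ equal to $P(\cdot, x_2, \ldots, x_m)$, and running a fast RS decoder on each column recovers $P$ column by column.

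The main obstacle is controlling the fraction of ``bad rows.'' A plain Markov bound at threshold $\deltaC/2$ only gives an $\eta$ fraction of good rows when $e < (1-\eta) \deltaC/2$, which is too few to sustain unique decoding on the columns. This is where the tools from list-decoding enter: replace the row's unique decode by a near-linear time RS list decoder up to a radius $\rho$ strictly beyond $\deltaC/2$ but within the Johnson bound $1 - \sqrt{1-\deltaC}$, which exceeds $\deltaC/2$ whenever $\deltaC > 0$. By Markov at threshold $\rho$, the true $P_{x_1}$ appears in the returned short list on a substantially larger fraction of rows; the column step then becomes a soft-decision or errors-and-erasures RS decoding problem, with the correct branch singled out by a consistency check against the received word $r$.

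The hardest part is the precise combination: converting short per-row candidate lists into reliable unique column decodes while staying near-linear. The $(1-o(1))$ slack on the decoding radius is split across the $m$ recursion levels so that each Markov step loses only a $(1 - O(1/m))$ factor in the good fraction, which is feasible because $\deltaC > \epsilon$ is assumed. Each level contributes $\tilde O(|S|^m)$ work from the row and column decodings, so the full recursion fits inside $|S|^m \cdot \polylog(|S|^m, \log |\F|)$, as required.
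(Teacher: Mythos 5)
There is a genuine gap, and it is exactly the one the paper warns about: your decomposition treats the code as a tensor-style code, decoding whole degree-$d$ slices and then degree-$d$ columns, and this cannot reach (nearly) half of $\deltaC$. Quantitatively, the list-decoding radius you can afford on a row is at most the Johnson radius $\rho = 1-\sqrt{1-\deltaC} \approx \deltaC/2 + \deltaC^2/8$, so Markov only guarantees that the fraction of rows whose true restriction fails to appear on the list is below $\frac{\deltaC/2}{\rho} \approx 1-\deltaC/4$. Thus for small constant $\deltaC$ (say $d=0.9|S|$) almost all rows may be bad, while the column code you then decode still has relative distance only $\deltaC$: no soft-decision, errors-and-erasures, or consistency-check step can uniquely recover a codeword of relative distance $\deltaC$ from a possibly $(1-\deltaC/4)$ fraction of corrupted/unknown positions. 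This is a constant-factor deficit, so splitting the $(1-o(1))$ slack into $(1-O(1/m))$ losses per level does not help. Moreover, for $m\geq 3$ your ``rows'' are $(m-1)$-variate Reed--Muller codes over a product set, not Reed--Solomon codes, and list-decoding those beyond half their minimum distance is not available (the paper explicitly lists it as an open problem), so the list-decoding patch is only even meaningful at the bottom level.

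The paper avoids this by never decoding columns to degree-$d$ polynomials. It writes $P(\bm{X},Y)=\sum_{i=0}^{d}P_i(\bm{X})Y^{d-i}$ and exploits the relations among the row restrictions: the coefficient of $Y^{d-i}$ across rows is a polynomial of degree only $i$. At stage $i$ the rows are decoded (with soft information) to degree $d-i$, the extracted leading coefficients are decoded to a degree-$i$ polynomial, and the weighted-distance calculation $\Delta(f_i,P_i) \le \Delta(r_i,C_i)/(n-d+i) < \frac{n^{m}}{2}(1-\frac{i}{n})$ is what makes half the minimum distance attainable --- the column degree $i$ and row degree $d-i$ trade off exactly. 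In that scheme, list-decoding is used only as a running-time optimization (one list per row is reused for $\Omega(n)$ iterations, and a randomized Forney-style GMD decoder replaces the quadratic soft decoder), not to increase the fraction of correctly decoded rows. If you want to salvage your write-up, you would need to replace the slice-then-column recursion by this coefficient-peeling recursion (decoding with uncertainties in $m-1$ variables), which is the paper's main idea and is absent from your proposal.
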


Over the rational numbers, we get a version of Theorem~\ref{thm:intromain} where the running time
is $\poly(|S|^m, t)$, where $t$ is the maximum bit-complexity of any point in $S$ or in the image of $r$.
This enables us to decode multivariate polynomial codes upto half the minimum distance
in the natural special case where the evaluation set $S$ equals $\{1, 2, \ldots, n \}$.

We also mention that decoding Reed-Muller codes over an arbitrary product set $S^m$ appears as a subroutine in the local decoding algorithm for multiplicity codes~\cite{KSY} (see Section 4 on ``Solving the noisy system"). Our results allow the local decoding algorithms there to run efficiently over all fields (\cite{KSY} could only do this over fields of small characteristic, where algebraically special sets $S$ are available). 


\subsection{Related work}

There have been many works studying the decoding of multivariate polynomial codes, which
prove (and improve) various special cases of our main theorem.

\paragraph{Reed-Solomon codes ($m=1$):} When $m = 1$, our problem is also known as
the problem of decoding Reed-Solomon codes upto half their minimum distance. That this
problem can be solved efficiently is very classical, and a number of 
algorithms are known for this (Mattson-Solomon~\cite{MSbook}, Berlekamp-Massey~\cite{Ma}, Berlekamp-Welch~\cite{WB}).
The underlying algorithmic ideas have subsequently had a tremendous impact on algebraic algorithms.

For Reed-Solomon codes, it is in fact known how to list-decode beyond half the minimum distance, upto the
Johnson bound (Guruswami-Sudan~\cite{GuSu}).
This has had numerous further applications in coding theory, complexity theory and pseudorandomness.

\paragraph{Special sets $S$:} For very special sets $S$, it turns out that there are some algebraic ways
to reduce the decoding of multivariate polynomial codes over $S^m$ to the decoding of univariate polynomial
codes. This kind of reduction is possible when $S$ equals the whole field $\F$, or more generally when 
$S$ equals an affine subspace over the prime subfield of $\F$.

When $S = \F_q$, then $S^m = \F_q^m$ and $S^m$ can then be identified with the large field $\F_{q^m}$ in a natural
$\F_q$-linear way (this understanding of Reed-Muller codes was discovered by~\cite{KLP}). This converts the multivariate setting into univariate setting, identifies
the multivariate polynomial code as a subcode of the univariate polynomial code, and (somewhat miraculously),
the minimum distance of the univariate polynomial code equals the minimum distance of the multivariate polynomial code.
Thus the classical Reed-Solomon decoding algorithms can then be used, and this leads to
an algorithm for the multivariate setting decoding upto half the minimum distance. In fact,
Pellikaan-Wu~\cite{PW} observed that this connection allows one to decode multivariate polynomial codes beyond half the minimum distance too, provided $S$ is special in the above sense.

Another approach which works in the case of $S = \F_q$ is based on local decoding. Here we use the fact that $S^m = \F_q^m$ contains many lines (not just the axis-parallel ones), and then use the univariate decoding algorithms to decode on those lines from $(1- \frac{d}{q})/2$ fraction errors. This approach manages to decode multivariate polynomial codes with $S = \F_q$
from $(\frac{1}{2} - o(1))$ of the minimum distance.
Again, this approach does not work for general $S$, since a general $S^m$ usually contains only axis-parallel lines (while $\F_q^m$ has many more lines).

\paragraph{Low degree $d$:}
When the degree $d$ of the multivariate polynomial code is significantly smaller than $|S|$, then a number of other
list-decoding based methods come into play.

The powerful Reed-Muller list-decoding algorithm of Sudan~\cite{Sudan} and its multiplicity-based generalization, based on
$(m+1)$-variate interpolation and root-finding, can decode from $1 - (\frac{d}{|S|})^{\frac{1}{m+1}}$ fraction
errors. With small degree $d = o(|S|)$ and $m = O(1)$, this decoding radius equals $1 - o(1)$! However when $d$
is much larger (say $0.9 \cdot |S|$), then the fraction of errors decodable by this algorithm is around $\frac{1}{m+1} \cdot (1 - \frac{d}{|S|}) = \frac{1}{m+1} \cdot \delta_C$.

Another approach comes from the list-decoding of tensor codes~\cite{GoGuRa}. While the multivariate polynomial codes we are interested in are not tensor codes, they are subcodes of the code of polynomials with {\em individual degree} at most $d$.
Using the algorithm of~\cite{GoGuRa} for decoding tensor codes, we get an algorithm that can decode from a $1-o(1)$ fraction of errors when $d = o(|S|)$, but fails to approach a constant fraction of the minimum distance when $d$ approaches $|S|$.

In light of all the above, to the best of our knowledge, for multivariate polynomial codes with $d > 0.9 \cdot |S|$ (i.e., $\delta_C < 0.1$), and
$S$ generic, the largest fraction of errors which could be corrected efficiently was about $\frac{1}{m+1}\delta_C$. In particular, 
the correctable fraction of errors is a vanishing fraction of the minimum distance, as the number of variables $m$ grows.

We thus believe it is worthwhile to investigate this problem, not only because of its basic nature, but also because of the many different
powerful algebraic ideas that only give partial results towards it.

\subsection{Overview of the decoding algorithm}


We now give a brief overview of our decoding algorithms.
Let us first discuss the bivariate ($m=2$) case. 
Here we are given a received word $r:S^2\to \F$ such
that there exists a codeword $P(X,Y) \in \F[X,Y]$ of degree at most $d = (1-\delta_C)|S|$ with $\Delta(P, r) < \frac{\delta_C}{2}$.
Our goal is to find $P(X,Y)$.

First some high-level strategy. An important role in our algorithm is played by the following observation:
the restriction of a degree $\leq d$
bivariate polynomial $P(X,Y)$ to a vertical line (fixing $X = \alpha$) or a horizontal line (fixing $Y = \beta$)
gives a degree $\leq d$ univariate polynomial. Perhaps an even more important role is played
by the following disclaimer: {\em the previous observation does not characterize bivariate polynomials
of degree $d$!} The set of functions $f: S^2 \to \F$ for which the horizontal restrictions and vertical restrictions
are polynomials of degree $\leq d$ is the code of polynomials with {\em individual degree} at most $d$ (this is the tensor
Reed-Solomon code, with much smaller distance than the Reed-Muller code). For such a function $f$ to be
in the Reed-Muller code, the different univariate polynomials that appear as horizontal and vertical restrictions must be related
in some way. The crux of our algorithm is to exploit these relations.

It will also help to recap the standard algorithm to decode
tensor Reed-Solomon codes upto half their minimum distance (this
scheme actually works for general tensor codes). 
Suppose we are given a received word $r: S^2 \to \F$, and we
want to find a polynomial $P(X,Y)$ with individual degrees 
at most $d$ which is close to $r$. 
One then takes the rows of this new received word (after having corrected the columns) and decodes them to the nearest degree $\leq d$ polynomial.
The key point is to pass some ``soft information" from the column decodings to the row decodings; the columns which were
decoded from more errors are treated with lower confidence. This 
decodes the tensor Reed-Solomon code from $1/2$ the minimum distance
fraction errors. Several ingredients from this algorithm will
appear in our Reed-Muller decoding algorithm.

Now we return to the problem of decoding Reed-Muller codes.
Let us write $P(X,Y)$ as a single 
variable polynomial in $Y$ with coefficients in $\F[X]$: $P(X,Y) = \summ_{i=0}^{d}{P_i(X)Y^{d-i}}$, where $\deg(P_i) \leq i$.
For each $\alpha \in S$, consider the restricted univariate polynomial $P(\alpha, Y)$.
Since $\deg(P_0) = 0$, $P_0(\alpha)$ must be the same for each $\alpha$. Thus all the polynomials $\langle P(\alpha, Y) \rangle _{\alpha \in S}$
have the same coefficient for $Y^d$. Similarly, the coefficients of $Y^{d-i}$ in the polynomials
 $\langle P(\alpha, Y)\rangle_{\alpha \in S}$ fit a degree $i$ polynomial. 

 
As in the tensor Reed-Solomon case, our algorithm begins by decoding each
column $r(\alpha, \cdot)$ to the nearest degree $\leq d$ univariate polynomial.
Now, instead of trying to use these decoded column polynomials to recover $P(X,Y)$ in one shot,
we aim lower and just try to recover $P_0(X)$. The advantage is that $P_0(X)$ is only a degree $0$
polynomial, and is thus resilient to many more errors than a degree $d$ polynomial.
Armed with $P_0(X)$, we then proceed to find $P_1(X)$. The knowledge of $P_0(X)$ allows us to
decode the columns $r(\alpha, \cdot)$ to a slightly larger radius; in turn this improved radius allows us to recover
the degree $1$ polynomial $P_1(X)$. 
At the $i^{\rm{th}}$ stage, we have already recovered $P_0(X)$, $P_1(X)$, \ldots, $P_{i-1}(X)$.
Consider, for each $\alpha \in S$, the function $f_\alpha(Y) = r(\alpha, Y) - \sum_{j = 0}^{i-1} P_j(\alpha)Y^{d-j}$.
Our algorithm decodes $f_{\alpha}(Y)$ to the nearest degree $d-i$ polynomial: note that as $i$ increases,
we are decoding to a lower degree polynomial, and hence we are able to handle a larger fraction of errors.
Define $h(\alpha)$ to be the coefficient of $Y^{d-i}$ in the polynomial so obtained; this
``should" equal the evaluation of the degree $i$ polynomial $P_i(\alpha)$. So we next decode $h(\alpha)$ to the nearest degree $i$
polynomial (using the appropriate soft information), and it turns out that this decoded polynomial must
equal $P_i(X)$. By the time $i$ reaches $d$, we would have recovered $P_0(X), P_1(X), \ldots, P_d(X)$,
and hence all of $P(X,Y)$. Summarizing, the algorithm repeatedly decodes the columns $r(\alpha, \cdot)$,
and at each stage it uses the relationship between the different univariate polynomial $P(\alpha, Y)$
to: (1) learn a little bit more about the polynomial $P(X,Y)$, and  (2) increase the radius to which
we can decode $r(\alpha, \cdot)$ in the next stage. This completes the description of the algorithm
in the $m = 2$ case.

The case of general $m$ is very similar, with only a small augmentation needed. Decoding $m$-variate
polynomials turns out to reduce to decoding $m-1$-variate polynomials with soft information; thus
in order to make a sustainable recursive algorithm, we aim a little higher and instead solve
the more general problem of decoding
multivariate polynomial codes with uncertainties (where each coordinate of the received word has an associated
``confidence" level).

To implement the above algorithms in near-linear time, we use some tools from list-decoding.
The main bottleneck in the running time is the requirement of having to decode the same column $r(\alpha, \cdot)$
multiple times to larger and larger radii (to lower and lower degree polynomials). To save on these
decodings, we can instead list-decode $r(\alpha, \cdot)$ to a large radius using a near-linear time list-decoder
for Reed-Solomon codes; this reduces the number of required decodings of the same column from $d$ to $O(1)$ (provided
$d < (1 - \Omega(1)) |S|$). For the $m=2$ case this works fine, but for $m > 2$ case this faces a serious obstacle;
in general it is impossible to efficiently list-decode Reed-Solomon codes {\em with uncertainties} beyond half the minimum distance of the code
(the list size can be superpolynomial). We get around this using some technical ideas,
based on speeding-up the decoding of Reed-Muller codes with uncertainties when the fraction of errors is significantly
smaller than half the minimum distance. For details, see Section~\ref{NearLinearRM}.

\subsection{Organization of this paper}

In Section 2, we cover the notion of weighted distance, which will be used in handling Reed-Solomon and Reed-Muller 
decoding with soft information on the reliability of the symbols in the encoding.  In Section 3, we state and prove 
a polynomial time algorithm for decoding bivariate Reed-Muller codes to half the minimum distance.  We then generalize 
the proof to decode multivariate Reed-Muller codes in Section 4.  Finally, in sections 5 and 6, we show that decoding Reed-Muller 
codes to almost half the minimum distance can be done in near-linear time by improving on the algorithms in Section 3 and 4.

\section{Preliminaries}

At various stages of the decoding algorithm,
we will need to deal with symbols and received words in which we
have varying amounts of confidence.
We now introduce some language to deal with such notions.

Let $\Sigma$ denote an alphabet.  A {\em weighted symbol} of $\Sigma$ is simply an element of $\Sigma \times [0,1]$.
In the weighted symbol $(\sigma, u)$, we will be thinking of $u \in [0,1]$ as our uncertainty that $\sigma$ is the symbol we should be talking about.

For a weighted symbol $(\sigma, u)$ and a symbol $\sigma'$,
we define their distance $\Delta( (\sigma, u), \sigma')$
by:
\begin{align*}
\Delta( (\sigma, u) , \sigma' )   =  \begin{cases} 1 - u/2   & \sigma \neq \sigma' \\ u/2  &  \sigma = \sigma' \end{cases}     
\end{align*}

For a weighted function $r : T \to \Sigma \times [0,1]$, and
a (conventional) function $f: T \to \Sigma$, we define their Hamming distance by
$$\Delta(r,f) = \sum_{t \in T} \Delta(r(t), f(t)).$$

The key inequality here is the triangle inequality.
\begin{lemma}[Triangle inequality for weighted functions]
Let $f, g: T \to \Sigma$ be functions, and let $r : T \to \Sigma \times [0,1]$ be
a weighted function.
Then:
$$ \Delta(r, f) + \Delta(r, g) \geq \Delta(f,g).$$
\end{lemma}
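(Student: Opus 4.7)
The plan is to reduce the inequality to a pointwise statement. Since $\Delta(r,f) = \sum_{t \in T} \Delta(r(t), f(t))$ is a sum and, interpreting $\Delta(f,g)$ for conventional functions as $\sum_{t \in T} \mathbf{1}[f(t) \neq g(t)]$, it suffices to prove that for every single $t \in T$,
\[
\Delta(r(t), f(t)) + \Delta(r(t), g(t)) \geq \mathbf{1}[f(t) \neq g(t)].
\]
Once this pointwise inequality is established, summing over all $t$ yields the global triangle inequality.

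Fix $t$ and write $r(t) = (\sigma, u)$, $f(t) = \sigma_1$, $g(t) = \sigma_2$, with $u \in [0,1]$. I would then do a short case analysis. If $\sigma_1 = \sigma_2$, the right-hand side is $0$ while the left-hand side is nonnegative, so the inequality is immediate. Otherwise $\sigma_1 \neq \sigma_2$, and the right-hand side equals $1$; now split on where $\sigma$ lies. If $\sigma = \sigma_1$, then $\sigma \neq \sigma_2$, so the left-hand side is $u/2 + (1 - u/2) = 1$; the case $\sigma = \sigma_2$ is symmetric. If $\sigma$ differs from both $\sigma_1$ and $\sigma_2$, then the left-hand side is $(1 - u/2) + (1 - u/2) = 2 - u$, which is at least $1$ because $u \leq 1$.

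There is no real obstacle here; the only mildly delicate point is verifying the last case, where one must use the constraint $u \in [0,1]$ (and in particular $u \leq 1$) to conclude $2 - u \geq 1$. If the weights were not restricted to $[0,1]$, this is exactly where the argument would fail, which shows the role of the weight range in the definition. Everything else is bookkeeping of the two-case formula for $\Delta((\sigma,u), \sigma')$.
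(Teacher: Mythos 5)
Your proof is correct and follows essentially the same route as the paper: reduce to the pointwise inequality $\Delta(r(t),f(t)) + \Delta(r(t),g(t)) \geq 1$ at coordinates where $f(t) \neq g(t)$, then check the three cases for where $\sigma$ lies, the only substantive case being $2 - u \geq 1$ from $u \leq 1$. Your explicit handling of the $f(t)=g(t)$ coordinates and the remark on the role of $u \in [0,1]$ are fine additions but do not change the argument.
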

\begin{proof}
We will show that if $t \in T$ is such that
$f(t) \neq g(t)$, then $\Delta(r(t), f(t)) + \Delta(r(t), g(t)) \geq 1$.
This will clearly suffice to prove the lemma.

Let $r(t) = (\sigma, u)$. Suppose $f(t) = \sigma_1$ and $g(t) = \sigma_2$.
Then either $\sigma \neq \sigma_1$ or $\sigma \neq \sigma_2$, or both.
Thus either we have $\Delta(r(t), f(t)) + \Delta(r(t), g(t)) = (1-u/2) + u/2$
or we have $\Delta(r(t), f(t)) + \Delta(r(t), g(t)) = u/2 + (1-u/2)$,
or we have $\Delta(r(t), f(t)) + \Delta(r(t), g(t)) = (1-u/2) + (1-u/2)$.
In all cases, we have $\Delta(r(t), f(t)) + \Delta(r(t), g(t)) \geq 1$, as desired.
\end{proof}

The crucial property that this implies is the unique decodability up to
half the minimum distance of a code for {\em weighted} received words.

\begin{lemma}
Let $\calC \subseteq \Sigma^T$ be a code with minimum distance $\Delta$.
Let $r: T \to \Sigma \times [0,1]$ be a weighted function.
Then there is at most one $f \in \calC$ satisfying
$$\Delta(r, f) < \Delta/2.$$
\end{lemma}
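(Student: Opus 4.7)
The plan is a textbook contradiction argument, deriving uniqueness directly from the triangle inequality that was just established. Suppose for contradiction that there exist two distinct codewords $f, g \in \calC$, both at weighted distance strictly less than $\Delta/2$ from $r$. The goal will be to show this forces $\Delta(f,g) < \Delta$, contradicting the assumption that $\calC$ has minimum distance $\Delta$.

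The key step is to invoke the triangle inequality for weighted functions, applied to $f$, $g$, and $r$: it tells us
$$\Delta(f,g) \leq \Delta(r,f) + \Delta(r,g).$$
Using the hypothesis that each of the two terms on the right is strictly less than $\Delta/2$, the right-hand side is strictly less than $\Delta$. So $\Delta(f,g) < \Delta$. Since $f$ and $g$ are assumed distinct codewords, this contradicts the minimum distance property of $\calC$.

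There is essentially no obstacle here — the nontrivial work was already done in proving the triangle inequality for the weighted distance. The one small thing to be careful about is that $\Delta(f,g)$ on the left is the ordinary Hamming distance between two conventional codewords (so the minimum distance bound $\Delta(f,g) \geq \Delta$ applies), while the right-hand side uses the weighted distance to $r$; the triangle inequality lemma already bridges these two notions. Thus the proof is a one-line deduction from the preceding lemma, and no further machinery is needed.
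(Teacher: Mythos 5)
Your proof is correct and is exactly the argument the paper intends: the lemma is stated immediately after the triangle inequality for weighted functions precisely because uniqueness follows from it by the one-line contradiction you give (two codewords within weighted distance $\Delta/2$ of $r$ would be at Hamming distance less than $\Delta$ from each other). Nothing is missing, and your remark that the triangle inequality bridges the weighted and ordinary distances is the only point of care needed.
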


\section{Bivariate Reed-Muller Decoding}

%

In this section, we provide an algorithm for decoding bivariate Reed-Muller codes to half the minimum distance.  
Consider the bivariate Reed-Muller decoding problem.  We are given a received word $r:S^2\to \F$.  Suppose that 
there is a codeword $C\in \F[X,Y]$ with $\deg(C)\leq d$, whose distance $\Delta(r,C)$ from the received word is at most 
half the minimum distance $n(n-d)/2$.  The following result says that there is a polynomial time algorithm in the size of 
the input $n^2$ to find $C$:

\begin{theorem}
\label{Bslow}
Let $\F$ be a finite field and let $S\subseteq \F$ be a nonempty subset of size $|S|=n$.  Given a received word 
$r:S^2\to\F$, there is a $O(n^3 \polylog (n,|\F|))$ time algorithm to find 
the unique polynomial (if it exists) $C\in\F[X,Y]$ with $\deg(C)\leq d$ such that 
$$\Delta(r,C) < \frac{n^2}{2}\left(1-\frac{d}{n}\right).$$
\end{theorem}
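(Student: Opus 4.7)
The plan is to recover $C(X,Y)$ coefficient-by-coefficient, after writing it in the form
$$C(X,Y) = \sum_{i=0}^{d} C_i(X)\, Y^{d-i}, \qquad \deg C_i \le i.$$
I would build up $C_0, C_1, \ldots, C_d$ in order, exploiting the fact that the low-degree pieces are much more error-tolerant than $C$ itself. The outer loop runs for $i = 0, 1, \ldots, d$; at the start of stage $i$ I already possess $C_0, \ldots, C_{i-1}$ and my goal is to output $C_i$.

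At stage $i$, for each column $\alpha \in S$ I form the partially-corrected column
$$r_i(\alpha, Y) \;:=\; r(\alpha, Y) - \sum_{j < i} C_j(\alpha)\, Y^{d-j},$$
and run a univariate Berlekamp--Welch-type decoder (treating failure as ``distance $= \infty$'' and output arbitrary) to find some degree-$\le d-i$ polynomial $g_\alpha$ minimizing $d_\alpha := \Delta(r_i(\alpha, \cdot), g_\alpha)$. Let $h(\alpha)$ be the coefficient of $Y^{d-i}$ in $g_\alpha$, and assign confidence weight
$$u(\alpha) \;=\; \min\!\left(1,\ \frac{2 d_\alpha}{n - d + i}\right).$$
I then feed the weighted word $\tilde h(\alpha) = (h(\alpha), u(\alpha))$ into a soft-decision Reed--Solomon decoder for degree $\le i$ (e.g.\ a Koetter--Vardy style algebraic soft decoder, or a weighted interpolation-and-root-finding decoder) to produce a candidate for $C_i$.

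The core technical step is to show that $C_i$ is the unique degree-$\le i$ polynomial within weighted distance $< (n-i)/2$ of $\tilde h$, so the soft decoder is forced to output it. Write $\tilde C(\alpha, Y) = \sum_{j \ge i} C_j(\alpha) Y^{d-j}$, which has degree $\le d-i$ with leading coefficient $C_i(\alpha)$, and let $e_\alpha := \Delta(r(\alpha,\cdot), C(\alpha,\cdot))$; the hypothesis gives $\sum_\alpha e_\alpha < n(n-d)/2$, and the minimality of $g_\alpha$ gives $d_\alpha \le e_\alpha$. The key observation is that whenever $h(\alpha) \ne C_i(\alpha)$, the polynomials $g_\alpha$ and $\tilde C(\alpha,\cdot)$ are two distinct polynomials of degree $\le d-i$, hence agree on at most $d-i$ points, so the univariate triangle inequality yields $d_\alpha + e_\alpha \ge n - d + i$. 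Combining this with the definition of $u(\alpha)$, a short case analysis (capped vs.\ uncapped regime) yields the per-column bound $\Delta(\tilde h(\alpha), C_i(\alpha)) \le e_\alpha/(n-d+i)$. Summing over $\alpha$ and using $i(d-i) \ge 0$ gives
$$\Delta(\tilde h, C_i) \;\le\; \frac{\sum_\alpha e_\alpha}{n-d+i} \;<\; \frac{n(n-d)}{2(n-d+i)} \;\le\; \frac{n-i}{2},$$
which is strictly less than half the minimum distance $n-i$ of the degree-$i$ Reed--Solomon code, so by the weighted uniqueness lemma of Section 2 the soft decoder must return $C_i$.

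Iterating over all $d+1 \le n$ stages recovers every $C_i$ and hence $C$. Each stage performs $n$ univariate hard-decoder calls and one soft-decoder call, each on length-$n$ words, for $\widetilde{O}(n^2)$ field operations per stage and $O(n^3 \, \polylog(n,|\F|))$ overall. The main obstacle I expect is pinning down the precise formula for the confidence weight $u(\alpha)$: it must be small enough when the column decoding is reliable that correct coordinates really count as correct, but large enough when the column decoding is unreliable that bad coordinates cannot sabotage the outer decoder; the choice above is calibrated so that the per-column budget $d_\alpha + e_\alpha \ge n-d+i$ (from the inner triangle inequality) converts exactly into the outer bound $\Delta(\tilde h, C_i) < (n-i)/2$. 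A secondary obstacle is ensuring that a soft-decision Reed--Solomon algorithm that achieves half-the-minimum-distance decoding for this reliability model is available as a black box; for this I would invoke a standard weighted-interpolation decoder.
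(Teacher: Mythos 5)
Your proposal is correct and follows essentially the same route as the paper: the same decomposition $C=\sum_i C_i(X)Y^{d-i}$, the same iterative column-decode-then-soft-decode structure with uncertainty $\delta_x/((n-d+i)/2)$, the same triangle-inequality case analysis giving the per-column bound $e_\alpha/(n-d+i)$, and the same summation $\frac{n(n-d)}{2(n-d+i)}\le\frac{n-i}{2}$. The only cosmetic differences are that you cap the weight at $1$ explicitly and invoke a weighted-interpolation/Koetter--Vardy soft decoder, whereas the paper instantiates the soft-decoding black box with Forney's generalized minimum distance decoder, which delivers exactly the half-minimum-distance guarantee in this weighted metric.
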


\subsection{Outline of Algorithm}
The general idea of the algorithm is to write $C(X,Y) = \summ_{i=0}^d{P_i(X) Y^{d-i}}\in \F[X][Y]$ as a 
polynomial in $Y$ with coefficients as polynomials in $\F[X]$, and attempt to uncover the coefficients 
$P_i(X)$ one at a time.

We outline the first iteration of the algorithm, which uncovers the coefficient $P_0(X)$ of degree $0$.
View the encoded message as a matrix on $S\times S$, where the rows are indexed by $x\in S$ 
and the columns by $y\in S$.  We first Reed-Solomon decode the rows $r(x,Y), x\in S$ to half the minimum distance 
$(n-d)/2$ and extract the coefficient of $Y^d$ in those decodings.  This gives us guesses for what $P_0(x)$ is for $x\in S$.  
However, this isn't quite enough to determine $P_0(X)$.  So we will also include some soft information 
which tells us how uncertain we are that the coefficient is correct.  The uncertainty is a number in $[0,1]$ that is 
based on how far the decoded codeword $G_x(Y)$ is from the received word $r(x,Y)$.  The farther apart, the higher the uncertainty.  
A natural choice for the uncertainty is simply the ratio of the distance $\Delta(G_x(Y),r(x,Y))$ to half the minimum distance 
$(n-d)/2$.  Let $f:S\to F\times[0,1]$ be the function of guesses for $P_0(x)$ and their uncertainties.  
We then use a Reed-Solomon decoder with uncertainties to find the degree $0$ polynomial that is closest to $f(X)$.  This will 
give us $P_0(X)$.  Finally, subtract $P_0(X) Y^d$ from $r(X,Y)$ and repeat to get the subsequent coefficients.

In the algorithm, we will use REED-SOLOMON-DECODER$(r,d)$ to denote the $O(n\polylog n)$ time algorithm that performs Reed-Solomon 
decoding of degree $d$ to half the minimum distance~\cite{Berl,WB}.  We use RS-SOFT-DECODER$(r,d)$ to denote the $O(n^2\polylog n)$ time algorithm 
that performs Reed-Solomon decoding of degree $d$ with uncertainties to half the minimum distance, which 
is based on Forney's generalized minimum distance decoding algorithm for concatenated codes~\cite{Forney}.  

\begin{algorithm}[H]
\caption{Decoding Bivariate Reed Muller\label{algBslow}}

\begin{algorithmic}[1]
\State Input: $r:S^{2}\to\F$.
\For{$i=0,1,\ldots,d$}
 \State Define $r_i:S\times S\to\F$ by 
 $$r_i(X,Y)=r(X,Y) - \summ_{j=0}^{i-1} Q_j(X)Y^{d-j}.$$
 \For{$x\in S$}
  \State Define $r_{i,x}:S\to\F$ by 
	$$r_{i,x}(Y)= r_i(x,Y).$$
	\State Define $G_{x}(Y)\in \F[Y]$ by
	$$G_{x}(Y)= \text{REED-SOLOMON-DECODER}(r_{i,x}(Y),d-i).$$
	\State $\sigma_{x}\gets \text{Coeff}_{Y^{d-i}}(G_{x})$.
	\State $\delta_{x}\gets \Delta(r_{i,x},G_{x})$.
	
 \EndFor
	\State Define the weighted function $f_i:S\to\F\times[0,1]$ by
	$$f_i(x)=\left(\sigma_{x},\frac{\delta_{x}}{(n-d+i)/2}\right).$$
 \State Define $Q_i:S\to \F$ by
	$$Q_i(X)= \text{RS-SOFT-DECODER}(f_i(X),i).$$
\EndFor
\State Output: $\summ_{i=0}^{d} Q_i(X)Y^{d-i}$.

\end{algorithmic}
\end{algorithm}

\subsection{Proof of Theorem~\ref{Bslow}}
\begin{proof}
\medskip\noindent\textbf{Correctness of Algorithm}
It suffices to show that $Q_i(X) = P_i(X)$ for $i=0,1,\ldots,d$, which we prove by induction.    
For this proof, the base case and inductive step can be handled by a single proof.  
We assume the inductive hypothesis that we have $Q_j(X)=P_j(X)$ for $j<i$.  
Note that the base case is $i=0$ and in this case, we assume nothing.  

It is enough to show $\Delta(f_i(X),P_i(X))<\frac{n}{2}\left(1-\frac{i}{n}\right)$.  Then 
$P_i(x)$ is the unique polynomial within weighted distance $\frac{n}{2}\left(1-\frac{i}{n}\right)$ 
of $f_i(X)$.  So RS-SOFT-DECODER$(f_i(X),i)$ will output $Q_i(X)=P_i(X)$.

We first show that $r_i(X,Y)$ is close to $C_i(X,Y)=\summ_{j=i}^{d} P_j(X)Y^{d-j}$.  
Observe that:

\begin{eqnarray*}
&&r_i(X,Y)-C_i(X,Y)\\
&=& (r_i(X,Y)+\summ_{j=1}^{i-1} P_j(X)Y^{d-j}) - (C_i(X,Y)+\summ_{j=1}^{i-1} P_j(X)Y^{d-j}))\\
&=& (r_i(X,Y)+\summ_{j=1}^{i-1} Q_j(X)Y^{d-j}) - C(X,Y)\\
&=& r(X,Y)-C(X,Y).
\end{eqnarray*}

Hence, 
$$\Delta(r_i(X,Y),C_i(X,Y))=\Delta(r(X,Y),C(X,Y))<\frac{n^2}{2}\left(1-\frac{d}{n}\right).$$

For each $x\in S$, define $C_{i,x}(Y) = C_i(x,Y)$.  
Define $\Delta_{x}=\Delta(r_{i,x}(Y),C_{i,x}(Y))$.  
Let $A = \{x\in S | G_{x}(Y) = C_{i,x}(Y)\}$ be the set of choices of $x$ 
such that $G_{x}(Y)=$ REED-SOLOMON-DECODER$(r_{i,x}(Y),d-i)$ produces $C_{i,x}(Y)$.  

Then, for $x\in A$, we have 
$$\delta_{x} = \Delta(r_{i,x}(Y),G_{x}(Y)) = \Delta(r_{i,x}(Y),C_{i,x}(Y)) = \Delta_{x}.$$

And for $x\notin A$, we have $G_{x}\neq C_{i,x}$, so
$$\delta_{x} = \Delta(r_{i,x}(Y),G_{x}(Y)) \geq n-d+i-\Delta(r_{i,x}(Y),C_{i,x}(Y)) = n-d+i-\Delta_{x}.$$

We now upper bound $\Delta(f_i(X),P_i(X))$:

\begin{eqnarray*}
\Delta(f_i(X),P_i(X)) &\leq & 
\summ_{x\in A}\frac{1}{2}\frac{\delta_{x}}{(n-d+i)/2}+\summ_{x\notin A}{1-\frac{1}{2}\frac{\delta_{x}}{(n-d+i)/2}}\\
&\leq & \summ_{x\in A}\frac{\Delta_{x}}{n-d+i}+\summ_{x\notin A}{1-\frac{n-d+i-\Delta_{x}}{n-d+i}}\\
&=& \summ_{x\in A}\frac{\Delta_{x}}{n-d+i}+\summ_{x\notin A}{\frac{\Delta_{x}}{n-d+i}}\\
&=& \summ_{x\in S^m}\frac{\Delta_{x}}{n-d+i}\\
&=& \frac{\Delta(r_i(X,Y),C_i(X,Y))}{n-d+i}\\
&<& \frac{n^2}{2}\left(1-\frac{d}{n}\right)\frac{1}{n-d+i}\\
&=& \frac{n}{2}\cdot\frac{n-d}{n-d+i}\\
&\leq& \frac{n}{2}\cdot\frac{n-i}{n}\\
&=& \frac{n}{2}\left(1-\frac{i}{n}\right).
\end{eqnarray*}

\medskip\noindent\textbf{Runtime of Algorithm}

We claim that the runtime of our algorithm is $O(n^3 \polylog n)$, 
ignoring the $\polylog |\F|$ factor from field operations.  The algorithm has $d+1$ 
iterations.  In each iteration, we update $r_i$, apply REED-SOLOMON-DECODER to $n$ rows and 
apply RS-SOFT-DECODER a single time to get the leading coefficient.  As updating takes 
$O(n^2)$ time, REED-SOLOMON-DECODER takes $O(n\polylog n)$ time, and RS-SOFT-DECODER takes $O(n^2\polylog n)$ time, 
we get $O(n^2\polylog n)$ for each iteration.  $d+1$ iterations gives a total runtime of $O(dn^2\polylog n)< O(n^3\polylog n)$.

\end{proof}

\section{Reed-Muller Decoding for General $m$}

We now generalize the algorithm for decoding bivariate Reed-Muller codes to handle Reed-Muller codes of 
any number of variables.  As before, we write the codeword as a polynomial in one of the variables and attempt to uncover 
its coefficients one at a time.  Interestingly, this leads us to a Reed-Muller decoding on one fewer variable, 
but with uncertainties.  This lends itself nicely to an inductive approach on the number of variables, however, 
the generalization requires us to be able to decode Reed-Muller codes with uncertainties.  This leads us to our main theorem:

\begin{theorem}
\label{RMD}
Let $\F$ be a finite field and let $S\subseteq \F$ be a nonempty subset of size $|S|=n$.  Given a received word 
with uncertainties $r:S^m\to\F\times[0,1]$, there is a $O(n^{m+2} \polylog (n,|\F|))$ time algorithm to find 
the unique polynomial (if it exists) $C\in\F[X_1,\ldots,X_m]$ with $\deg(C)\leq d$ such that 
$$\Delta(r,C) < \frac{n^m}{2}\left(1-\frac{d}{n}\right).$$
\end{theorem}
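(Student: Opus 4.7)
The plan is to generalize the bivariate argument of Theorem~\ref{Bslow} by outer induction on $m$, using the $(m-1)$-variate Reed-Muller decoder with uncertainties as a subroutine in the $m$-variate algorithm. The base case $m=1$ is exactly RS-SOFT-DECODER, which handles weighted Reed-Solomon decoding to half the minimum distance $(n-d)/2$ in time $O(n^2\polylog n) \le O(n^{1+2}\polylog n)$. For the inductive step, singling out one variable $Y$ I would write the target codeword as
\[ C(X_1,\ldots,X_{m-1},Y) = \summ_{i=0}^{d} P_i(X_1,\ldots,X_{m-1})\, Y^{d-i}, \qquad \deg(P_i)\le i, \]
and run an inner induction on $i$ that recovers $P_i$ one coefficient at a time, in direct analogy with Algorithm~\ref{algBslow} but with three modifications: (i) since the received word is weighted, each column restriction $r_{i,\mathbf{a}}(Y) = r_i(\mathbf{a},Y)$ for $\mathbf{a}\in S^{m-1}$ is itself a weighted received word and is decoded with RS-SOFT-DECODER rather than REED-SOLOMON-DECODER; (ii) forming $r_i$ means subtracting $\summ_{j<i} Q_j(\mathbf{x})y^{d-j}$ only from the $\F$-coordinate of each weighted symbol, leaving its uncertainty untouched; (iii) the last step of each iteration invokes the recursive $(m-1)$-variate RM-with-uncertainties decoder on the weighted function $f_i : S^{m-1}\to \F\times[0,1]$ at degree $i$, instead of RS-SOFT-DECODER.

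For correctness, by the outer inductive hypothesis it suffices to show at each iteration that $\Delta(f_i, P_i) < \frac{n^{m-1}}{2}(1-\frac{i}{n})$, so that the recursive call returns $Q_i = P_i$. Assuming the inner inductive hypothesis $Q_j=P_j$ for $j<i$, set $C_i = \summ_{j\ge i} P_j(\mathbf{X}) Y^{d-j}$; at every point $(\mathbf{x},y)$ the weighted symbols $r_i(\mathbf{x},y)$ and $C_i(\mathbf{x},y)$ differ by exactly the same $\F$-value as $r(\mathbf{x},y)$ and $C(\mathbf{x},y)$, while the uncertainties are unchanged, so
\[ \Delta(r_i, C_i) = \Delta(r, C) < \tfrac{n^m}{2}\bigl(1-\tfrac{d}{n}\bigr) = \tfrac{n^{m-1}(n-d)}{2}. \]
For each column $\mathbf{a}$, writing $\Delta_\mathbf{a} = \Delta(r_{i,\mathbf{a}}, C_{i,\mathbf{a}})$ and $\delta_\mathbf{a} = \Delta(r_{i,\mathbf{a}}, G_\mathbf{a})$, the weighted triangle inequality from Section~2 combined with the degree-$(d-i)$ RS minimum distance $n-d+i$ yields $\delta_\mathbf{a} \ge n-d+i - \Delta_\mathbf{a}$ whenever $G_\mathbf{a}\ne C_{i,\mathbf{a}}$. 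Substituting into $f_i(\mathbf{x}) = (\sigma_\mathbf{x}, \delta_\mathbf{x}/((n-d+i)/2))$ and following the bivariate calculation term-by-term gives
\[ \Delta(f_i, P_i) \;\le\; \frac{\Delta(r_i, C_i)}{n-d+i} \;<\; \frac{n^{m-1}(n-d)}{2(n-d+i)} \;\le\; \frac{n^{m-1}}{2}\Bigl(1-\frac{i}{n}\Bigr), \]
the last step reducing to $n(n-d)\le (n-i)(n-d+i)$, i.e. $i(d-i)\ge 0$.

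For the running time, each of the $d+1$ iterations updates $r_i$ in $O(n^m)$ time, runs RS-SOFT-DECODER on $n^{m-1}$ columns at total cost $n^{m-1}\cdot O(n^2\polylog n) = O(n^{m+1}\polylog n)$, and makes one recursive call of cost $O(n^{(m-1)+2}\polylog n) = O(n^{m+1}\polylog n)$ by the outer induction, summing to $O(n^{m+2}\polylog n)$ overall as claimed. The main obstacle is conceptual rather than computational: one must carefully maintain the weighted-symbol bookkeeping (subtractions change values but not uncertainties, so the identity $\Delta(r_i,C_i)=\Delta(r,C)$ is preserved), and one must invoke the Section~2 weighted triangle inequality instead of the ordinary one to secure the key column-level bound $\delta_\mathbf{a}\ge n-d+i-\Delta_\mathbf{a}$. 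Once these are set up cleanly the algebra is identical to the bivariate case, and the recursion closes under the $O(n^{m+2}\polylog n)$ budget.
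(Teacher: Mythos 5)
Your proposal is correct and follows essentially the same route as the paper: an outer induction on the number of variables with RS-SOFT-DECODER as the base case, an inner induction over the coefficients $P_i$ using column-wise soft Reed-Solomon decodings and the weighted triangle inequality to get $\delta_{\mathbf{a}} \geq n-d+i-\Delta_{\mathbf{a}}$, and the same bound $\Delta(f_i,P_i) \leq \Delta(r_i,C_i)/(n-d+i) < \frac{n^{m-1}}{2}\left(1-\frac{i}{n}\right)$. The runtime accounting also matches the paper's, so nothing further is needed.
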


Note that to decode a Reed-Muller code without uncertainties, we may just set all the initial uncertainties to $0$.  
The algorithm slows by a factor of $n$ from the bivariate case due to having to use the RS-SOFT-DECODER instead of 
the faster REED-SOLOMON-DECODER on the rows of the received word.

\begin{proof}
The proof is by induction on the number of variables, and closely mirrors the proof of the bivariate case.

\medskip\noindent\textbf{Base Case}

We are given a received word with uncertainties $r:S\to\F\times[0,1]$ and asked to find the unique polynomial 
$C\in\F[X]$ with $\deg(C)\leq d$ within weighted distance $\frac{n-d}{2}$ of $r$.  This is just Reed-Solomon 
decoding with uncertainty, which can be done in time $O(n^2 \polylog n)$.

\medskip\noindent\textbf{Inductive Step}

Assume that the result holds for $m$ variables.  That is, assume we have access to an algorithm 
REED-MULLER-DECODER$(r,m,d)$ which takes as input a received word with uncertainties 
$r:S^m\to\F\times[0,1]$, and outputs the unique polynomial of degree at most $d$ (if it exists) within 
weighted distance $\frac{n^m}{2}\left(1-\frac{d}{n}\right)$ from $r$.  
We want to produce an algorithm for $m+1$ variables.  Before 
we progress, we set up some definitions to make the presentation and analysis of the algorithm cleaner.  
We are given $r:S^{m+1}\to\F\times[0,1]$.  View $r$ as a map from $S^m\times S\to\F\times[0,1]$, and write 
$r(\bm{X},Y) = (\overline{r}(\bm{X},Y),u(\bm{X},Y))$.  

Suppose that there exists a polynomial $C\in\F[\bm{X},Y]$ with $\deg(C)\leq d$ such that 
$$\Delta(r,C) < \frac{n^m}{2}\left(1-\frac{d}{n}\right).$$
View $C$ as a polynomial in $Y$ with coefficients in $\F[\bm{X}]$, $C(\bm{X},Y) = \summ_{i=0}^{d} P_i(\bm{X})Y^{d-i}$.  
The general strategy of the algorithm is to determine the $P_i$'s inductively by performing $d+1$ iterations from 
$i=0$ to $i=d$, and recovering $P_i(\bm{X})$ at the $i$-th iteration.

For the $i$-th iteration, consider the word 
$$r_i(\bm{X},Y) = \left(\overline{r}(\bm{X},Y) - \summ_{j=0}^{i-1} P_j(\bm{X})Y^{d-j},u(\bm{X},Y)\right).$$
Since $r$ is close to $\summ_{j=0}^{d} P_j(\bm{X})Y^{d-j}$, $r_i$ will be close to $C_i=\summ_{j=i}^{d} P_j(\bm{X})Y^{d-j}$.  
Our goal is to find $P_i(\bm{X})$, the leading coefficient of $C_i$ when viewed as a polynomial in $Y$.  
For each $\bm{x}\in S^m$, we decode the Reed-Solomon code with uncertainties given by $r_i(\bm{x},Y)$ and extract 
the coefficient of $Y^{d-i}$ along with how uncertain we are about the correctness of this coefficient.  This gives 
us a guess for the value $P_i(\bm{x})$ and our uncertainty for this guess.  We construct the function
$f_i:S^m\to F\times[0,1]$ of guesses for $P_i$ with their uncertainties.
We then apply the induction hypothesis of Theorem~\ref{RMD} to $f_i$ to recover $P_i$.

\begin{algorithm}[H]
\caption{Decoding Reed Muller with Uncertainties\label{alg}}

\begin{algorithmic}[1]
\State Input: $r:S^{m+1}\to\F\times[0,1]$.
\For{$i=0,1,\ldots,d$}
 \State Define $r_i:S^m\times S\to\F\times[0,1]$ by 
 $$r_i(\bm{X},Y)=\left(\overline{r}(\bm{X},Y) - \summ_{j=0}^{i-1} Q_j(\bm{X})Y^{d-j},u(\bm{X},Y)\right).$$
 \For{$\bm{x}\in S^m$}
  \State Define $r_{i,\bm{x}}:S\to\F\times[0,1]$ by 
	$$r_{i,\bm{x}}(Y)= r_i(\bm{x},Y).$$
	\State Define $G_{\bm{x}}(Y)\in \F[Y]$ by
	$$G_{\bm{x}}(Y)= \text{RS-SOFT-DECODER}(r_{i,\bm{x}}(Y),d-i).$$
	\State $\sigma_{\bm{x}}\gets \text{Coeff}_{Y^{d-i}}(G_{\bm{x}})$.
	\State $\delta_{\bm{x}}\gets \Delta(r_{i,\bm{x}},G_{\bm{x}})$.
	
 \EndFor
	\State Define the weighted function $f_i:S^m\to\F\times[0,1]$ by
	$$f_i(\bm{x})=\left(\sigma_{\bm{x}},\frac{\delta_{\bm{x}}}{(n-d+i)/2}\right).$$
 \State Define $Q_i:S^m\to \F$ by
	$$Q_i(\bm{X})= \text{REED-MULLER-DECODER}(f_i(\bm{X}),m,i).$$
\EndFor
\State Output: $\summ_{i=0}^{d} Q_i(\bm{X})Y^{d-i}$.

\end{algorithmic}
\end{algorithm}

\medskip\noindent\textbf{Correctness of Algorithm}

Suppose there is a polynomial $C(\bm{X},Y) = \summ_{i=0}^{d} P_i(\bm{X})Y^{d-i}$ such that 
$$\Delta(r(\bm{X},Y),C(\bm{X},Y)) < \frac{n^{m+1}}{2}\left(1-\frac{d}{n}\right).$$
We will show by induction that the $i$-th iteration of the algorithm produces $Q_i(\bm{X})=P_i(\bm{X})$.  
For this proof, the base case and inductive step can be handled by a single proof.  
We assume the inductive hypothesis that we have $Q_j(\bm{X})=P_j(\bm{X})$ for $j<i$.  
Note that the base case is $i=0$ and in this case, we assume nothing.

It is enough to show $\Delta(f_i(\bm{X}),P_i(\bm{X}))<\frac{n^m}{2}\left(1-\frac{i}{n}\right)$.  Then 
$P_i(\bm{X})$ is the unique polynomial within weighted distance $\frac{n^m}{2}\left(1-\frac{i}{n}\right)$ 
of $f_i(\bm{X})$.  So REED-MULLER-DECODER$(f_i(\bm{X}),m,i)$ will output $Q_i(\bm{X})=P_i(\bm{X})$.

We first show that $r_i(\bm{X},Y)$ is close to $C_i(\bm{X},Y)=\summ_{j=i}^{d} P_j(\bm{X})Y^{d-j}$.  
Observe that:

\begin{eqnarray*}
&&r_i(\bm{X},Y)-C_i(\bm{X},Y)\\
&=& (r_i(\bm{X},Y)+\summ_{j=1}^{i-1} P_j(\bm{X})Y^{d-j}) - (C_i(\bm{X},Y)+\summ_{j=1}^{i-1} P_j(\bm{X})Y^{d-j}))\\
&=& (r_i(\bm{X},Y)+\summ_{j=1}^{i-1} Q_j(\bm{X})Y^{d-j}) - C(\bm{X},Y)\\
&=& r(\bm{X},Y)-C(\bm{X},Y).
\end{eqnarray*}

Hence, 
$$\Delta(r_i(\bm{X},Y),C_i(\bm{X},Y))=\Delta(r(\bm{X},Y),C(\bm{X},Y))<\frac{n^{m+1}}{2}\left(1-\frac{d}{n}\right).$$

For each $\bm{x}\in S^m$, define $C_{i,\bm{x}}(Y) = C_i(\bm{x},Y)$.  
Define $\Delta_{\bm{x}}=\Delta(r_{i,\bm{x}}(Y),C_{i,\bm{x}}(Y))$.  
Let $A = \{\bm{x}\in S^m | G_{\bm{x}}(Y) = C_{i,\bm{x}}(Y)\}$ be the set of choices of $\bm{x}$ 
such that $G_{\bm{x}}(Y)=$ REED-SOLOMON-DECODER$(r_{i,\bm{x}}(Y),d-i)$ produces $C_{i,\bm{x}}(Y)$.  

Then, for $\bm{x}\in A$, we have 
$$\delta_{\bm{x}} = \Delta(r_{i,\bm{x}}(Y),G_{\bm{x}}(Y)) = \Delta(r_{i,\bm{x}}(Y),C_{i,\bm{x}}(Y)) = \Delta_{\bm{x}}.$$

And for $\bm{x}\notin A$, we have $G_{\bm{x}}\neq C_{i,\bm{x}}$, so
$$\delta_{\bm{x}} = \Delta(r_{i,\bm{x}}(Y),G_{\bm{x}}(Y)) \geq n-d+i-\Delta(r_{i,\bm{x}}(Y),C_{i,\bm{x}}(Y)) = n-d+i-\Delta_{\bm{x}}.$$

We now upper bound $\Delta(f_i(\bm{X}),P_i(\bm{X}))$:

\begin{eqnarray*}
\Delta(f_i(\bm{X}),P_i(\bm{X})) &\leq & 
\summ_{\bm{x}\in A}\frac{1}{2}\frac{\delta_{\bm{x}}}{(n-d+i)/2}+\summ_{\bm{x}\notin A}{1-\frac{1}{2}\frac{\delta_{\bm{x}}}{(n-d+i)/2}}\\
&\leq & \summ_{\bm{x}\in A}\frac{\Delta_{\bm{x}}}{n-d+i}+\summ_{\bm{x}\notin A}{1-\frac{n-d+i-\Delta_{\bm{x}}}{n-d+i}}\\
&=& \summ_{\bm{x}\in A}\frac{\Delta_{\bm{x}}}{n-d+i}+\summ_{\bm{x}\notin A}{\frac{\Delta_{\bm{x}}}{n-d+i}}\\
&=& \summ_{\bm{x}\in S^m}\frac{\Delta_{\bm{x}}}{n-d+i}\\
&=& \frac{\Delta(r_i(\bm{X},Y),C_i(\bm{X},Y))}{n-d+i}\\
&<& \frac{n^{m+1}}{2}\left(1-\frac{d}{n}\right)\frac{1}{n-d+i}\\
&=& \frac{n^m}{2}\cdot\frac{n-d}{n-d+i}\\
&\leq& \frac{n^m}{2}\cdot\frac{n-i}{n}\\
&=& \frac{n^m}{2}\left(1-\frac{i}{n}\right).
\end{eqnarray*}

\medskip\noindent\textbf{Runtime of Algorithm}

We claim the runtime of our $m$-variate Reed-Muller decoder is $O(n^{m+2} \polylog n)$, 
ignoring the $\polylog |\F|$ factor from field operations.  
We again proceed by induction on $m$.  In the base case of $m=1$, we simply run the 
Reed-Solomon decoder with uncertainties, which runs in $O(n^2 \polylog n)$ time.  
Now suppose the $m$-variate Reed-Muller decoder runs in time $O(n^{m+2} \polylog n)$.  
We need to show that the $m+1$-variate Reed-Muller decoder runs in time 
$O(n^{m+3} \polylog n)$.

The algorithm makes $d+1$ iterations.  In each iteration, we perform $n^m$ Reed-Solomon 
decodings with uncertainties, and extract the leading coefficient along with its uncertainty for each one.  
Each Reed-Solomon decoding takes $O(n^2\polylog n)$ time, while computing an uncertainty of a leading coefficient 
takes $O(n\polylog n)$.  So in this step, we have cumulative runtime $O(n^{m+2}\polylog n)$.  
Next we do a single $m$-variate Reed-Muller decoding with uncertainties, which takes $O(n^{m+2}\polylog n)$ 
by our induction hypothesis.  This makes the total runtime $O(dn^{m+2}\polylog n) \leq O(n^{m+3}\polylog n)$, 
as desired.

\end{proof}

\section{Near-Linear Time Decoding in the Bivariate Case}

In this section, we present our near-linear time, randomized decoding algorithm 
for bivariate Reed-Muller codes.



\begin{theorem}
\label{BRMD}
Let $\alpha \in (0,1)$ be a constant.
Let $\F$ be a finite field and let $S\subseteq \F$ be a nonempty subset of size $|S|=n$. Let
$d = \alpha n$.  Given a received 
word $r:S^2\to\F$, there is a $O(n^2 \polylog (n,|\F|))$ time, randomized algorithm to find 
the unique polynomial (if it exists) $C\in\F[X,Y]$ with $\deg(C)\leq d$ such that 
$$\Delta(r,C) < \frac{n^2}{2}\left(1-\alpha-\frac{1}{\sqrt{n}}\right).$$
\end{theorem}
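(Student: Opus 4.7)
The plan is to take Algorithm~\ref{algBslow} and shave a factor of $n$ via list-decoding. Two places cost $O(n \polylog n)$ per iteration and $O(n^3 \polylog n)$ in aggregate: (i) the $n$ REED-SOLOMON-DECODER calls per iteration, and (ii) the RS-SOFT-DECODER call per iteration (itself $O(n^2 \polylog n)$). I would handle (i) by amortizing the column decodings into a single list-decoding per column, and (ii) by a near-linear time soft decoder, exploiting the theorem's $1/\sqrt{n}$-slack.

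\textbf{Precomputing column list-decodings.} For each column $x \in S$, run a near-linear Reed-Solomon list-decoder on $r(x, \cdot)$ to degree $d$ and radius $\rho n$ for $\rho = 1 - \sqrt{\alpha} - O(1/\sqrt{n})$ just below the Johnson bound. Since $\alpha$ is constant, the list $L_x$ has $O(1)$ candidates and preprocessing costs $O(n^2 \polylog n)$ total. At iteration $i$ of the main loop, replace the call to REED-SOLOMON-DECODER on $r_{i,x}(Y)$ by an $O(1)$-time scan of $L_x$ for the candidate $\tilde C_x$ that is consistent with $Q_0(x), \ldots, Q_{i-1}(x)$ (agreeing on the coefficients of $Y^d, \ldots, Y^{d-i+1}$) and nearest to $r(x, \cdot)$; extract $\sigma_x$ as its coefficient of $Y^{d-i}$ and $\delta_x$ as its distance. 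Columns in $B := \{x : C(x, \cdot) \notin L_x\}$ are treated as erasures. Recovering $Q_i$ from $f_i$ then uses a near-linear time soft Reed-Solomon decoder (e.g., a fast Koetter-Vardy implementation, or a randomized Forney-style GMD with $O(\log n)$ sampled thresholds each followed by a hard errors-and-erasures decoder), for $O(n \polylog n)$ per iteration.

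\textbf{Correctness.} The analysis mirrors the proof of Theorem~\ref{Bslow}. For $x \in A := S \setminus B$, arguing as in that proof (and using the triangle inequality to handle the wrong-candidate case when $\Delta_x > (n-d+i)/2$), the per-column weighted distance is at most $\Delta_x / (n-d+i)$. For $x \in B$, where $\Delta_x > \rho n$, the erasure contributes at most $1/2$, and Markov gives $|B| \leq \Delta(r, C)/(\rho n)$. Summing and regrouping, $\Delta(f_i, P_i) \leq \Delta(r, C)/(n-d+i) + |B| \cdot [(n-d+i)/2 - \rho n]/(n-d+i)$. For iterations $i \leq (2\rho + \alpha - 1)n$ the bracketed factor is non-positive, and substituting $\Delta(r, C) < (1 - \alpha - 1/\sqrt{n}) n^2 / 2$ together with $(n-i)(n-d+i) \geq n^2 (1-\alpha)$ yields $\Delta(f_i, P_i) < (n-i)/2$, enough for the soft decoder to return $Q_i = P_i$.

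\textbf{Main obstacle.} The hardest case is iterations $i > (2\rho + \alpha - 1)n = (1-\sqrt{\alpha})^2 n$, which matter only when $\alpha$ is close to $1$ (since then $|B|$ can be $\Theta(n)$ and the positive $|B|$-term in the correctness bound exceeds the available slack). To handle this, I would augment the preprocessing with on-demand unique decoding of $r_{i, x}(Y)$ to degree $d-i$ for columns $x \in B$: the unique-decoding radius $(n-d+i)/2$ eventually exceeds $\rho n$, so such decoding succeeds at some smallest iteration $i^*_x$. A binary-search strategy interleaved with the main loop localizes $i^*_x$ in $O(\log n)$ attempts per column at $O(n \polylog n)$ each, for aggregate $O(|B| \cdot n \polylog n) = O(n^2 \polylog n)$. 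Once $C(x, \cdot)$ is recovered for $x \in B$, the column joins $A$ and no longer contributes to the $|B|$-term, restoring $\Delta(f_i, P_i) < (n-i)/2$ at every iteration and for every constant $\alpha \in (0, 1)$.
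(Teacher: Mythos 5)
Your high-level plan is the paper's: amortize the column decodings with a near-linear Reed--Solomon list-decoder and replace the row step by a randomized Forney-style soft decoder that loses $\sqrt{n}$ in radius (this is exactly the paper's FAST-RS-DECODER, and accounts for the $1/\sqrt{n}$ term in the statement). The divergence, and the gap, is in how you amortize. You list-decode each column \emph{once}, at radius $\rho n$ just below the Johnson bound, and this genuinely fails once the unique-decoding radius $(n-d+i)/2$ of iteration $i$ exceeds $\rho n$, i.e.\ for all $i > (2\rho+\alpha-1)n = (1-\sqrt{\alpha})^2 n$ --- which happens for every $\alpha > 1/4$, not only for $\alpha$ close to $1$. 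Your patch for this regime (treat columns of $B$ as erasures, then ``binary-search'' for the first iteration $i^*_x$ at which column $x$ becomes uniquely decodable) does not work as described. First, the predicate you want to search on cannot be evaluated out of order: decoding $r_{i,x}$ requires $Q_0(x),\dots,Q_{i-1}(x)$, which exist only once the main loop has reached iteration $i$, so attempts must proceed forward in time and there is nothing to binary-search over; moreover, when a unique decode returns a codeword you cannot certify per-column that it equals $C_{i,x}$, so ``success'' is not a reliable, monotone signal. Second, and more importantly, the accounting is missing for the window between $i^*_x$ and whichever iteration your schedule actually attempts the decode: in that window the erasure charge $1/2$ exceeds the allowed charge $\Delta_x/(n-d+i)$ (since there $\Delta_x < (n-d+i)/2$), and the slack available at iteration $i$ is only about $\frac{(d-i)(i+\sqrt{n})}{2(n-d+i)}$, which tends to $0$ as $i \to d$. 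With $|B|$ possibly $\Theta(n)$, even a small per-column overcharge at late iterations destroys the bound $\Delta(f_i,P_i) < (n-i-\sqrt{n})/2$ that the soft decoder needs, and any schedule making only $O(\log n)$ attempts per column necessarily leaves such windows. (A smaller issue: taking $\rho$ within $O(1/\sqrt{n})$ of the Johnson bound no longer guarantees constant list size; a constant gap, as in the paper, is what gives $|L_x|=O(1)$.)

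The missing idea, which is how the paper's Algorithm~\ref{alg2} resolves exactly this tension, is to \emph{refresh} the lists periodically rather than list-decode once: with $c=(1-\alpha)^2/8$, re-list-decode every column every $2cn$ iterations at radius $t_j=(n-d+j\cdot 2cn)/2+cn$, which exceeds half the minimum distance of the current (degree $d-j\cdot 2cn-k$) code for all $k<2cn$ while staying a constant fraction below the Johnson bound, so the list is constant-size and provably contains \emph{every} polynomial within the current unique-decoding radius of \emph{every} column --- including the heavily corrupted ones. Then the fast algorithm literally reproduces the column decodings of the slow Algorithm~\ref{algBslow} at every iteration, the analysis of Theorem~\ref{Bslow} carries over verbatim (with the $\sqrt{n}$ loss from the fast soft decoder), there are no erasures or special cases, and the total work is only $\frac{d}{2cn}\cdot n = O(n)$ list-decodings plus $d+1$ fast soft decodings, i.e.\ $O(n^2\polylog(n,|\F|))$. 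Your safe-regime analysis (the per-column charge $\Delta_x/(n-d+i)$ for $x\in A$, via the triangle inequality for wrong candidates) is sound and matches the paper; it is the one-shot list plus erasure/binary-search treatment of $B$ that needs to be replaced by periodic re-list-decoding.
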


\subsection{Outline of Improved Algorithm}

Recall that the decoding algorithms we presented in the previous sections
make $d+1$ iterations, where $d=\alpha n$, revealing 
a single coefficient of the nearest codeword during one iteration.  In a given 
iteration, we decode each row of $r_i(X,Y)$ to the nearest polynomial of degree 
$d-i$, extracting the coefficient of $Y^{d-i}$ and its uncertainty.  Then we 
Reed-Solomon decode with uncertainties to get the leading coefficient of $C(X,Y)$, 
when viewed as a polynomial in $Y$.

The runtime of this algorithm is $O(n^3 \polylog n)$.  Each iteration has $n$ 
Reed-Solomon decodings and a single Reed-Solomon decoding with uncertainties.  
As Reed-Solomon decoding takes $O(n \polylog n)$ time and Reed-Solomon decoding with 
uncertainties takes $O(n^2 \polylog n)$ time, we get a runtime of $O(n^3 \polylog n)$ 
with $d+1$ iterations.  To achieve near-linear time, we need to shave off a factor of $n$ on both the 
number of Reed-Solomon decodings and the runtime of Reed-Solomon decoding with 
uncertainties.  

To save on the number of Reed-Solomon decodings, we will 
instead list decode beyond half the minimum distance (using a near-linear time Reed-Solomon list-decoder),
and show that the list we  get is both small and essentially contains all of the decoded polynomials we require 
for $\Omega(n)$ iterations of $i$.  So we will do $O(n)$ Reed-Solomon list-decodings total 
instead of $O(n^2)$ Reed-Solomon unique decodings to half the minimum distance.

To save on the runtime of Reed-Solomon decoding with uncertainties, we will use
a probabilistic variant of Forney's generalized minimum distance decoding algorithm,
which runs in near-linear time, but reduces the decoding radius from $1/2$ the minimum
distance to $1/2 - o(1)$ of the minimum distance.


\subsection{Proof of Fast Bivariate Reed-Muller Decoding}

\begin{proof}[Proof of Theorem~\ref{BRMD}]

\medskip\noindent\textbf{Reducing the Number of Decodings}

To reduce the number of decodings, we will list decode past half the minimum distance.  
Let $r_{i,x}:S\to\F$ be a received word for a Reed-Solomon code $\calC_i$ of degree at most $d_i = d-i$.  
Let $t$ be the radius to which we list decode, and let $L_{i,x} = \{C\in \calC_i|\Delta(C,r_{i,x}) < t\}$ be 
the list of codewords within distance $t$ of $r_{i,x}$.  The radius to which we can decode while maintaining 
a polynomial-size list is given by the Johnson bound:
$$n(1-\sqrt{1-\delta_i}),$$

where $\delta_i = 1-\frac{d-i}{n} > 1-\frac{d}{n} = 1-\alpha$ is the relative distance of the code.  
By Taylor approximating the square root, we see that the Johnson bound exceeds half the minimum 
distance by $\Omega(n)$:

\begin{eqnarray*}
n(1-\sqrt{1-\delta_i}) &>& n(1-(1-\delta_i/2 + \delta_i^2/8 + 3\delta_i^3/16))\\
&=& n(\delta_i/2 + (1-\alpha)^2/8 + 3(1-\alpha)^3/16)\\
&=& (n-d+i)/2 + ((1-\alpha)^2/8)n + \epsilon n,
\end{eqnarray*}\

where $\epsilon=3(1-\alpha)^3/16$ is a positive constant.  By a standard list-size bound as in the one in Cassuto 
and Bruck~\cite{CB}, we see that if we set the list decoding radius $t = (n-d+i)/2 + ((1-\alpha)^2/8)n$, then the 
size of the list $|L_{i,x}|<\frac{1}{\epsilon}$ is constant.  So the list decoding radius exceeds half the minimum 
distance by $\Omega(n)$, and the list size is constant.  By Aleknovich's fast algorithm for weighted polynomial 
construction~\cite{Alek}, 
the list $L_{i,x}$ can be produced in time $(1/\alpha)^{O(1)}n \log^2 n \log\log n = O(n \polylog n)$.  
We will let RS-LIST-DECODER$(r,d,t)$ denote the Reed-Solomon list decoder that outputs a list of all 
ordered pairs of polynomials of degree at most $d$ within distance $t$ to the received word $r$ along 
with their distances to $r$.  Since the list size is constant, all of the distances can be computed in $O(n\polylog n)$ time.

\medskip\noindent\textbf{Faster Reed-Solomon Decoding with Uncertainties}

In the appendix, we give a description of the probabilistic GMD algorithm
that gives a faster Reed-Solomon decoder with uncertainties. 
We will refer to this algorithm as $\text{FAST-RS-DECODER}(f,i)$, where $f:S\to \F\times [0,1]$ 
is a received word with uncertainties, and $i$ is the degree of the code.  
$\text{FAST-RS-DECODER}(f,i)$ will output the codeword within distance $(n-i-\sqrt{n})/2$ 
(if it exists) with probability at least $1-\frac{1}{n^{\Omega(1)}}$ (the $\Omega(1)$
can be chosen to be an arbitrary constant, by simply repeating the algorithm independently several times).  
Therefore, in our final algorithm, with probability at least $99/100$, all invocations of the 
FAST-RS-DECODER will succeed.

\begin{algorithm}[H]
\caption{Decoding Bivariate Reed Muller\label{alg2}}

\begin{algorithmic}[1]
\State Input: $r:S^2\to\F$.
\State Let $c = ((1-\alpha)^2/8)$.
\For{$j=0,1,\ldots, \frac{d}{2cn}$}
 \State Let $t_j=\frac{n-d+j\cdot 2cn}{2} + cn$.
 \State Define $r_{j\cdot 2cn}:S\times S\to\F$ by 
 $$r_{j\cdot 2cn}(X,Y)=r(X,Y) - \summ_{i=0}^{j\cdot 2cn-1} Q_i(X)Y^{d-i}.$$
 \For{$x\in S$}
  \State Define $r_{j\cdot 2cn,x}:S\to\F$ by 
	$$r_{j\cdot 2cn,x}(Y)= r_{j\cdot 2cn}(x,Y).$$
	\State Define $\calC_{j\cdot 2cn}$ by
	$$\calC_{j\cdot 2cn} = \{C(Y)\in\F[Y]| \deg(C)<d-j\cdot 2cn\}.$$
	\State Define $L_{j,0,x} = \text{RS-LIST-DECODER}(r_{j\cdot 2cn,x}(Y),d-j\cdot 2cn,t_j)$.
 \EndFor
	
 \For{$k=0,1,\ldots,2cn-1$}
	\For{$x\in S$}
	 \State Define $(G_{x}(Y),\delta_x)\in L_{j,k,x}$ to be the unique codeword (if any) with 
	 $$\delta_x < \frac{n-d+j\cdot 2cn + k}{2}$$
	 \State $\sigma_{x}\gets \text{Coeff}_{Y^{d-j\cdot 2cn - k}}(G_{x})$.
	
	\EndFor
	 \State Define the weighted function $f_{j\cdot 2cn+k}:S\to\F\times[0,1]$ by
	 $$f_{j\cdot 2cn+k}(x)=\left(\sigma_{x},\frac{\delta_{x}}{(n-d+j\cdot 2cn + k)/2}\right).$$
	\State Define $Q_{j\cdot 2cn+k}:S\to \F$ by
	 $$Q_{j\cdot 2cn+k}(X)= \text{FAST-RS-DECODER}(f_{j\cdot 2cn+k}(X),j\cdot 2cn+k).$$
	
	\For{$x\in S$}
	 \State Define $L_{j,k+1,x} = \{(C-Q_{j\cdot 2cn+k}(x)Y^{d-j\cdot 2cn - k},\delta_{C,x})| C\in L_{j,k,x}, \text{Coeff}_{Y^{d-j\cdot 2cn - k}}(C)=Q_{j\cdot 2cn+k}(x)$\}.
	\EndFor
 \EndFor
\EndFor
\State Output: $\summ_{i=0}^{d} Q_i(\bm{X})Y^{d-i}$.

\end{algorithmic}
\end{algorithm}

\medskip\noindent\textbf{Correctness of Algorithm}

View the received word as a matrix on $S\times S$, where the rows are indexed by $x\in S$ 
and the columns by $y\in S$.  For correctness, we have to show two things.  
First, that Algorithm~\ref{alg2} produces the same 
row decodings $G_{x}(Y)$ as Algorithm~\ref{alg}.  Second, that the algorithm actually 
extracts the coefficients of $C(X,Y) = \summ_{i=0}^{d} P_i(X)Y^{d-i}$ when viewed as a polynomial in $Y$, 
i.e. $Q_i(X) = P_i(X)$ for $i=0,\ldots,d$.  
Define $r_{j\cdot 2cn + k}:S\times S\to\F$ by 
$$r_{j\cdot 2cn + k}(X,Y)=r(X,Y) - \summ_{i=0}^{j\cdot 2cn + k -1} Q_i(X)Y^{d-i},$$ 
and define $r_{j\cdot 2cn,x}:S\to\F$ by 
$$r_{j\cdot 2cn + k,x}(Y)= r_{j\cdot 2cn+k}(x,Y).$$

Then we want to show that in each of the $d+1$ iterations of $(j,k)$, we have
$$G_{x}(Y) = \text{REED-SOLOMON-DECODER}\left(r_{j\cdot 2cn + k,x}(Y),d-j\cdot 2cn - k\right).$$

It is enough to instead show that the list $L_{j,k,x}$ contains all the polynomials of degree at 
most $d-j\cdot 2cn-k$ within distance $t_j = (n-d+j\cdot 2cn)/2 + cn > (n-d+j\cdot 2cn + k)/2$ of $r_{j\cdot 2cn + k,x}(Y)$.
Furthermore, we want to show $Q_{j\cdot 2cn + k}(X) = P_{j\cdot 2cn + k}(X)$.

We prove this by induction on $(j,k)$.  The base case is $j=k=0$.  For each row $x\in S$, 
we have 
$$L_{0,0,x} = \text{RS-LIST-DECODER}(r_{j\cdot 2cn,x}(Y),d-j\cdot 2cn,t_0).$$

The induction hypothesis is that for every $(j',k') < (j,k)$ in the lexicographic order, we have 
$L_{j',k',x} = \{(\overline{C},\Delta(\overline{C},r_{j'\cdot 2cn + k',x})) | 
\overline{C}\in \calC_{j'\cdot 2cn + k'},\Delta(\overline{C},r_{j'\cdot 2cn + k',x}) < t_{j'}\}$ and that 
$Q_{j'\cdot 2cn + k'}(X) = P_{j'\cdot 2cn + k'}(X)$.  We will show the corresponding statements hold 
true for $(j,k)$.

If $k=0$, then the fact that the algorithm extracted the correct coefficients thus far means that the $r_{j\cdot 2cn}$ 
are the same in both Algorithm~\ref{alg} and Algorithm~\ref{alg2}.  Since 
$L_{j,0,x} = \text{RS-LIST-DECODER}(r_{j\cdot 2cn,x}(Y),d-j\cdot 2cn,t_j),$ 
the induction hypothesis on $L_{j,0,x}$ is met by the definition of RS-LIST-DECODER.  

If $k\neq 0$, then we know from the induction hypothesis that 
$L_{j,k-1,x} = \{(\overline{C},\Delta(\overline{C},r_{j\cdot 2cn + k-1,x})) | 
\overline{C}\in \calC_{j\cdot 2cn + k-1},\Delta(\overline{C},r_{j\cdot 2cn + k-1,x}) < t_j\}$.  
We want to say that 
$$L_{j,k,x} = \{(\overline{C},\Delta(\overline{C},r_{j\cdot 2cn + k,x})) | 
\overline{C}\in \calC_{j\cdot 2cn + k},\Delta(\overline{C},r_{j\cdot 2cn + k,x}) < t_j\}$$.

We defined $L_{j,k,x}$ in terms of $L_{j,k-1,x}$ to be:
$$\{(\overline{C}-Q_{j\cdot 2cn+k-1}(x)Y^{d-j\cdot 2cn - k+1},\Delta(\overline{C},r_{j\cdot 2cn + k-1,x}))| 
\overline{C}\in L_{j,k-1,x}, \text{Coeff}_{Y^{d-j\cdot 2cn - k+1}}(\overline{C})=Q_{j\cdot 2cn+k-1}(x)\}.$$

As $Q_{j\cdot 2cn+k-1}(X) = P_{j\cdot 2cn+k-1}(X)$, $L_{j,k,x}$ is essentially obtained by taking the 
codewords with the correct leading coefficients and subtracting off the leading term.  We claim that what we get 
is the set of all polynomials of degree at most $d-j\cdot 2cn-k$ within distance $t_j$ of $r_{j\cdot 2cn + k,x}$.  

Consider any $(G,\delta)\in L_{j,k,x}$.  By definition of $L_{j,k,x}$, we know there exists a $\overline{C} \in L_{j,k-1,x}$ 
with $\text{Coeff}_{Y^{d-j\cdot 2cn - k+1}}(\overline{C}) = Q_{j\cdot 2cn+k-1}(x)$ such that 
$$(G,\delta) = (\overline{C}-Q_{j\cdot 2cn+k-1}(x)Y^{d-j\cdot 2cn - k+1},\Delta(\overline{C},r_{j\cdot 2cn + k-1,x})).$$

So we have
\begin{eqnarray*}
\overline{C} &=& G+Q_{j\cdot 2cn+k-1}(x)Y^{d-j\cdot 2cn - k+1}\\
\delta &=& \Delta(\overline{C},r_{j\cdot 2cn + k-1,x}) < t_j.
\end{eqnarray*}

As $\text{Coeff}_{Y^{d-j\cdot 2cn - k+1}}(\overline{C}) = Q_{j\cdot 2cn+k-1}(x)$, we have $\deg(G)$ is at most $d-j\cdot 2cn - k$.  
Also, as $r_{j\cdot 2cn + k-1,x} = r_{j\cdot 2cn + k,x} + Q_{j\cdot 2cn+k-1}(x)Y^{d-j\cdot 2cn - k+1}$, we have 
$\Delta(G,r_{j\cdot 2cn+k,x})=\Delta(\overline{C},r_{j\cdot 2cn + k-1,x}) = \delta < t_j$.  
Hence, $G$ is a polynomial of degree at most $d-j\cdot 2cn-k$ within distance $t_j$ of $r_{j\cdot 2cn + k,x}$.  

For the reverse inclusion, suppose $G$ is a polynomial of degree at most $d-j\cdot 2cn-k$ at distance $\delta < t_j$ of $r_{j\cdot 2cn + k,x}$.  
Then $\overline{C} := G+Q_{j\cdot 2cn+k-1}(x)Y^{d-j\cdot 2cn - k+1} \in L_{j,k-1,x}$.  
Since $\text{Coeff}_{Y^{d-j\cdot 2cn - k+1}}(\overline{C}) = Q_{j\cdot 2cn+k-1}(x)$, we get that 
$G = \overline{C} - Q_{j\cdot 2cn+k-1}(x)Y^{d-j\cdot 2cn - k+1} \in L_{j,k,x}$, as desired.

It remains to show that $Q_{j\cdot 2cn+k}(X) = P_{j\cdot 2cn+k}(X)$.  As in the proof of Theorem~\ref{RMD}, we show 
that $\Delta(f_{j\cdot 2cn+k}(X), P_{j\cdot 2cn+k}(X)) < \frac{n-j-\sqrt{n}}{2}$, so that the output of $\text{FAST-RS-DECODER}(f_{j\cdot 2cn+k}(X),j)$ 
is $P_{j\cdot 2cn+k}(X)$.  Using the first part of the induction we just proved, we get the same $f_{j\cdot 2cn+k}(X)$ as in Algorithm~\ref{alg}.  
This means we can adopt a nearly identical argument to get to this step:
$$\Delta(f_{j\cdot 2cn+k}(X), P_{j\cdot 2cn+k}(X)) \leq \frac{\Delta(r_{j\cdot 2cn+k}(X,Y),C_{j\cdot 2cn+k}(X,Y))}{n-d+j\cdot 2cn+k}.$$

From here, we get:
\begin{eqnarray*}
\Delta(f_{j\cdot 2cn+k}(X), P_{j\cdot 2cn+k}(X)) &<& \frac{n^2}{2}\left(1-\frac{d}{n}-\frac{1}{\sqrt{n}}\right)\frac{1}{n-d+j\cdot 2cn+k}\\
&=& \frac{n}{2}\cdot\frac{n-d-\sqrt{n}}{n-d+j\cdot 2cn+k}\\
&\leq& \frac{n}{2}\cdot\frac{n-j\cdot 2cn-k-\sqrt{n}}{n}\\
&=& \frac{n-j\cdot 2cn-k-\sqrt{n}}{2}.
\end{eqnarray*}

\medskip\noindent\textbf{Analysis of Runtime of Bivariate Reed-Muller Decoder}

We run RS-LIST-DECODER $\frac{d}{2cn}n = \frac{\alpha}{2c}n = \frac{4\alpha}{(1-\alpha)^2}n$ times.  Also, we run 
FAST-RS-DECODER $d=\alpha n$ times.  As both of these algorithms run in $O(n \polylog n)$ time, the total runtime of the 
algorithm is $O(n^2 \polylog (n,|\F|))$, after accounting for field operations.  
As the input is of size $n^2$, this is near-linear in the size of the input.

\end{proof}

\section{Near-Linear Time Decoding in the General Case}
\label{NearLinearRM}

A more involved variation of the near-linear time, randomized decoding algorithm for 
bivariate Reed-Muller codes can be used to get a near-linear time, randomized algorithm for decoding 
Reed-Muller codes on any number of variables:

\begin{theorem}
\label{GenRMD}
Let $\F$ be a finite field and let $S\subseteq \F$ be a nonempty subset of size $|S|=n$.  
Let $\beta > \frac{1}{2}$.  
Given a received word  
$r:S^m\to\F$, there is a $O\left(n^m\cdot \polylog (n,|\F|)\right)$ time, randomized algorithm to find 
the unique polynomial (if it exists) $C\in\F[X_1,\ldots,X_m]$ with $\deg(C)\leq d$ such that 
$$\Delta(r,C) < \frac{n^m}{2}\left(1-\frac{d + (m-1)\beta\sqrt{n}}{n}\right).$$
\end{theorem}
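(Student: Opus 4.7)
My plan is to prove Theorem~\ref{GenRMD} by induction on $m$, generalizing the near-linear bivariate algorithm of Theorem~\ref{BRMD}. The right inductive object is actually a \emph{weighted} variant: for every $m \geq 1$ there is a near-linear time randomized algorithm decoding weighted received words $r : S^m \to \F \times [0,1]$ at degree $\leq d$ up to radius $\frac{n^m}{2}\bigl(1 - \frac{d + m\beta\sqrt{n}}{n}\bigr)$. Theorem~\ref{GenRMD} is obtained from the weighted statement by starting from an unweighted outermost input, which saves one level of recursion-induced loss and accounts for the $(m-1)\beta\sqrt{n}$ in the statement. This parallels how Theorem~\ref{RMD} strengthens Theorem~\ref{Bslow} so that the $m$-variate slow algorithm can recurse.

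The algorithm reuses the structure of Algorithm~\ref{alg2} almost verbatim, with the inner FAST-RS-DECODER call replaced by a recursive FAST-RM-DECODER call on an $(m-1)$-variate weighted function. Writing $C(\bm{X}, Y) = \sum_{i=0}^{d} P_i(\bm{X}) Y^{d-i}$, we recover $P_0, \ldots, P_d$ in order. The $d+1$ iterations are grouped into $O(1)$ blocks of $\Theta(n)$; at the start of each block, a column list decoder is applied to each of the $n^{m-1}$ columns $r(\bm{x}, \cdot)$, producing a constant-size list. Within a block, each iteration $i$ picks the unique list member with $\delta_{\bm{x}} < (n - d + i)/2$, extracts the leading coefficient into a weighted function $f_i : S^{m-1} \to \F \times [0,1]$, calls FAST-RM-DECODER recursively on $f_i$ at degree $i$ to obtain $Q_i = P_i$, and prunes the column lists by subtracting the just-recovered leading term.

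Correctness uses nested inductions: the outer induction on $m$ supplies the recursive decoder, while an inner induction on $(j,k)$ shows, exactly as in the proof of Theorem~\ref{BRMD}, that the per-column list always contains every polynomial of degree at most $d - j \cdot 2cn - k$ within the list-decoding radius of the residual column. The key distance inequality $\Delta(f_i, P_i) \leq \Delta(r_i, C_i)/(n - d + i)$, derived exactly as in the proof of Theorem~\ref{RMD}, together with the hypothesis on $\Delta(r, C)$ leaves a $\Theta(\sqrt{n})$ margin below the recursive $(m-1)$-variate decoding radius, so the recursive call returns the correct $P_i$. The runtime satisfies the recurrence $T(m) = O(n^{m-1}) \cdot O(n\polylog n) + d \cdot T(m-1) + O(n^m)$, which solves to $O(n^m \polylog n)$.

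The hard step --- and the source of the $\beta\sqrt{n}$ loss per recursion level --- is the column list decoder when the input is genuinely weighted ($m \geq 2$ inside the recursion). In that case the columns are weighted Reed-Solomon received words, and, as flagged in the introduction to Section~\ref{NearLinearRM}, weighted Reed-Solomon cannot be list-decoded beyond half the minimum distance with polynomial list size in general, so RS-LIST-DECODER cannot be applied verbatim. The resolution is to cash in one unit of $\beta\sqrt{n}$ slack at this level: since the slack guarantees that the error rate in each column is bounded away from half the minimum distance by $\Omega(\sqrt{n})$, one can replace the weighted list decoding by a cruder procedure --- for instance, list-decoding an unweighted hard-decision version of each column and charging the resulting rounding loss against the slack budget --- whose correctness is cleanly verifiable using the column-distance bound. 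Showing that this per-level $\beta\sqrt{n}$ budget is enough to absorb the rounding error and allow the induction to close is the principal technical content of the section.
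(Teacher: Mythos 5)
There is a genuine gap, and it sits exactly where you placed the ``principal technical content.'' Your induction needs, as its inductive hypothesis, a \emph{near-linear time} weighted $m$-variate decoder that reaches radius $\frac{n^m}{2}\bigl(1-\frac{d+m\beta\sqrt{n}}{n}\bigr)$, i.e.\ only $O(\sqrt{n})$ short of half the minimum distance, and you propose to build it by running the block/list-reuse template of Algorithm~\ref{alg2} on weighted columns, replacing the (impossible) weighted list decoding by list-decoding a hard-decision rounding of each column and charging the error to the $\beta\sqrt{n}$ slack. This does not work quantitatively: rounding away an uncertainty $u_i$ changes the weighted distance of that coordinate by $u_i/2$, so over a column of length $n$ the discrepancy between the weighted distance and the hard-decision distance can be as large as $n/2$, which swamps a $\beta\sqrt{n}$ budget. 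A randomized GMD-style rounding (erase with probability $u_i$, as in FAST-RS-DECODER) only certifies \emph{unique} decoding within roughly $(n-d-\sqrt{n})/2$; it gives no list guarantee beyond half the minimum distance, and a single unique decoding at a fixed radius cannot be reused across a block of $\Theta(n)$ iterations, because as $i$ grows by $\Theta(n)$ within the block the true codeword $C_{i,\bm{x}}$ may lie at distance well beyond that radius. So the inner weighted decoder you assert is not delivered by your construction, and it is in fact stronger than anything the paper proves.

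The paper's route is structurally different at this point. It proves Theorem~\ref{UncRMD}: a weighted $m$-variate decoder parametrized by a radius \emph{deficit} $e$, reaching only $\frac{n^m}{2}\bigl(1-\frac{d+m\beta\sqrt{n}+e}{n}\bigr)$, with runtime $O\bigl(\frac{n^{m+1}}{e+1}\polylog(n,|\F|)\bigr)$ --- near-linear only when $e=\Omega(n)$, and a full factor $n$ slower when $e$ is small. Its efficiency comes not from any list decoding of weighted columns but from decoding each weighted column once with FAST-RS-DECODER to radius $\frac{1}{2}(n-d+i-\beta\sqrt{n})$ and reusing that \emph{unique} decoding for the next $e+1$ iterations, which is legitimate precisely because the target radius is $e$ below half the minimum distance. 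Then, in the proof of Theorem~\ref{GenRMD}, the outermost level (where the input is unweighted) uses RS-LIST-DECODER with constant-size lists reused for $\Theta(n)$ iterations, and each inner call to RM-UNC-DECODER at degree $i$ has an effective deficit $e_i=\frac{(i+m\beta\sqrt{n})(d-i)}{n-d+i}$; individual late calls cost up to $O(n^{m})$ rather than $O(n^{m-1}\polylog n)$, and near-linearity of the whole algorithm is recovered only by the amortization $\sum_{i=0}^{d}\frac{1}{e_i+1}=O(\log n)$. Your proposal is missing both the deficit-parametrized weighted decoder and this amortized accounting; without them the per-call near-linear claim in your inductive statement is unsupported, and the recurrence $T(m)=O(n^{m-1})\cdot O(n\polylog n)+d\cdot T(m-1)+O(n^m)$ rests on an unproved hypothesis.
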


As part of the algorithm for near linear time Reed-Muller decoding, we will need to decode 
Reed-Muller codes with uncertainties to various radii less than half their minimum distance.  
We require the following theorem to do such decodings efficiently.

\begin{theorem}
\label{UncRMD}
Let $\F$ be a finite field and let $S\subseteq \F$ be a nonempty subset of size $|S|=n$.  
Let $\beta > \frac{1}{2}$, and let $e$ be an integer satisfying $0\leq e < n-d-m\beta\sqrt{n}$.  
Given a received word with uncertainties 
$r:S^m\to\F\times [0,1]$, there is a $O\left(\frac{n^{m+1}}{e+1}\cdot \polylog (n,|\F|)\right)$ time algorithm to find 
the unique polynomial (if it exists) $C\in\F[X_1,\ldots,X_m]$ with $\deg(C)\leq d$ such that 
$$\Delta(r,C) < \frac{n^{m}}{2}\left(1-\frac{d + m\beta\sqrt{n} + e}{n}\right).$$
\end{theorem}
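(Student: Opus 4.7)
The plan is to induct on the number of variables $m$, lifting the template of Algorithms~\ref{alg} and~\ref{alg2} to weighted inputs and using the slack $m\beta\sqrt n + e$ both to batch the column-level Reed--Solomon decodings and to accelerate the recursive Reed--Muller sub-calls. For the base case $m=1$, I would parameterize FAST-RS-DECODER from the appendix: given a weighted Reed--Solomon received word whose target codeword lies at distance at most $\tfrac12(n-d-\beta\sqrt n-e)$, a Forney-style counting argument shows that $\Omega(e+1)$ of the $n$ GMD erasure thresholds lead to a successful Berlekamp--Welch decoding, so sampling $O(n/(e+1))$ of them suffices to hit one with high probability, giving the claimed $O(\tfrac{n^2}{e+1}\polylog n)$ runtime.

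For the inductive step, assume the theorem for $m$ variables and consider an $(m+1)$-variate weighted received word $r$. We follow the outer structure of Algorithm~\ref{alg2}: iterations $i=0,\ldots,d$ successively recover $P_i(\bm X)$ by forming the partial residual $r_i$, extracting per-column $(\sigma_{\bm x},\delta_{\bm x})$ from decodings of $r_{i,\bm x}(Y)$, and invoking the $m$-variate sub-decoder of the inductive hypothesis on the resulting weighted function $f_i:S^m\to\F\times[0,1]$ with parameter $e_i=e$ to recover $Q_i=P_i$. Correctness follows the analysis of Theorem~\ref{RMD} essentially verbatim: one shows $\Delta(f_i,P_i)\leq \Delta(r_i,C_i)/(n-d+i)<\tfrac{n^m(n-d-(m+1)\beta\sqrt n-e)}{2(n-d+i)}$, and a direct algebraic check verifies that this is strictly below $\tfrac{n^m}{2}(1-\tfrac{i+m\beta\sqrt n+e}{n})$ --- the $m$-variate decoding radius for $e_i=e$; the difference of the two sides unwraps to $\tfrac{n^{m-1}}{2(n-d+i)}\bigl(i(d-i)+(d-i)(m\beta\sqrt n+e)+n\beta\sqrt n\bigr)>0$, so the same $e_i=e$ suffices at every iteration.

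To achieve the $O(\tfrac{n^{m+2}}{e+1}\polylog n)$ runtime, two savings are required, both in the spirit of Algorithm~\ref{alg2}. First, partition the $d+1$ outer iterations into batches of size $B=\Theta(e+1)$ and, at the start of each batch, run FAST-RS-DECODER once on each of the $n^m$ columns of $r_{jB}$ to degree $d-jB$; since $r_{jB+k,\bm x}-G_{\bm x}^{(k)}=r_{jB,\bm x}-G_{\bm x}$ whenever the intermediate $Q$'s are correct, a single per-column decoding suffices for the entire batch, so the coefficient of $Y^{d-jB-k}$ for each $k<B$ can be read off directly from $G_{\bm x}$. This yields total column-decoding cost $\tfrac{d}{B}\cdot n^m\cdot O(n\polylog n)=O(\tfrac{n^{m+2}}{e+1}\polylog n)$. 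Second, each of the $d+1$ inner $m$-variate sub-decoder calls costs $O(\tfrac{n^{m+1}}{e+1}\polylog n)$ by the inductive hypothesis, summing to the same $O(\tfrac{n^{m+2}}{e+1}\polylog n)$.

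The main obstacle is the column-level analysis: accounting for the $\tfrac{\sqrt n}{2}$ radius loss inherent to FAST-RS-DECODER, and for the ``bad'' columns on which it fails. Since (as noted in the introduction) weighted Reed--Solomon list-decoding past half the minimum distance can produce superpolynomial lists, we cannot simply list-decode columns; instead, a Markov estimate bounds the number of bad columns and the weighted triangle inequality of Section~2 caps their per-column contribution to $\Delta(f_i,P_i)$. The $(m+1)\beta\sqrt n$ slack at the top level, with $\beta>\tfrac12$ strictly larger than the $\tfrac{\sqrt n}{2}$ loss from FAST-RS-DECODER, provides the room to absorb this loss across the batch; carefully tuning the batch size $B=\Theta(e+1)$ and the FAST-RS-DECODER radius parameter so that the batch-induced extra $O(B/(n-d+i))$ contribution per bad column still fits within the slack $\Theta(n\beta\sqrt n/(n-d+i))$ is the delicate technical point.
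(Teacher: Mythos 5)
Your outer structure (batching the column decodings into blocks of $\Theta(e+1)$, reusing one FAST-RS-DECODER output per block by subtracting recovered leading terms, and recursing on an $m$-variate weighted word $f_i$) is the same as the paper's, but there is a genuine gap exactly at the point you defer to ``the delicate technical point.'' Your correctness chain uses $\Delta(f_i,P_i)\le \Delta(r_i,C_i)/(n-d+i)$, i.e.\ it normalizes the column uncertainties by the full half minimum distance $(n-d+i)/2$. But your columns are only decoded to radius about $\tfrac12(n-d+i-\beta\sqrt n-e)$: the $\beta\sqrt n$ loss is inherent to FAST-RS-DECODER and the extra $e$ is forced by reusing one decoding for a whole batch. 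With that radius the per-column case analysis only yields $\Delta(f_i,P_i)\le \Delta(r,C)/(n-d+i-\beta\sqrt n-e)$, and with this correct denominator your fixed recursion parameter $e_i=e$ fails: you would need $\frac{n-d-(m+1)\beta\sqrt n-e}{n-d+i-\beta\sqrt n-e}\le \frac{n-i-m\beta\sqrt n-e}{n}$, which is false near $i=d$ once $e$ is moderately large (e.g.\ $n=10^4$, $d=n/2$, $m=2$, $\beta=0.6$, $e=10^3$, $i=d$ gives $3820/8940\approx 0.43$ on the left versus $0.39$ on the right). So the $m$-variate sub-call is handed a radius smaller than the bound you can actually prove on $\Delta(f_i,P_i)$, and its success is not guaranteed; a Markov count of ``bad columns'' cannot repair this, because the loss is incurred on every column, not only on the failed ones (the paper needs no such counting: failures are simply assigned uncertainty $1$ in a deterministic four-case analysis).

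The missing idea is the paper's choice of an $i$-dependent, larger recursion radius: it calls the $m$-variate decoder with radius $\frac{n^m}{2}\bigl(1-\frac{i+m\beta\sqrt n}{n-d+i}\bigr)$, i.e.\ $e_i=\frac{(i+m\beta\sqrt n)(d-i)}{n-d+i}$, for which the chain $\Delta(f_i,P_i)\le \frac{n^m}{2}\cdot\frac{n-d-(m+1)\beta\sqrt n-e}{n-d+i-\beta\sqrt n-e}\le \frac{n^m}{2}\cdot\frac{n-d-m\beta\sqrt n}{n-d+i}$ does go through. This in turn creates a runtime issue your proposal never faces (since $e_i$ can be $0$, e.g.\ at $i=d$, individual sub-calls can cost $n^{m+1}$): the paper resolves it with the amortized bound $\sum_{i=0}^d \frac{1}{e_i+1}=O(\log n)$ via a partial-fraction/integral estimate, so all recursive calls together cost $O(n^{m+1}\polylog n)$ and the $O\bigl(\frac{n^{m+2}}{e+1}\polylog n\bigr)$ column decodings dominate. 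Both ingredients --- the enlarged radius and the $\sum 1/(e_i+1)$ estimate --- are absent from your argument. (Minor: the paper's induction bottoms out at $m=2$ using FAST-RS-DECODER directly for the inner decodings; your $m=1$ base case via counting GMD thresholds is both unsubstantiated and unnecessary, since one FAST-RS-DECODER call already meets the stated bound.)
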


\begin{remark}
The algorithm requires the use of the FAST-RS-DECODER to a radius that is $\beta\sqrt{n}$ less than half the minimum 
distance.  As long as $\beta > \frac{1}{2}$, the FAST-RS-DECODER runs in $O(n\polylog n)$ time.
\end{remark}

\begin{proof}[Proof of Theorem~\ref{UncRMD}]
The proof is by induction on the number of variables $m$.  The proof of the base case of $m=2$ is similar to the 
proof of the inductive step and will be handled last.  
Assume the theorem statement is true for $m$, and let 
RM-UNC-DECODER$(f,d,s)$ denote the $O\left(\frac{n^{m+1}}{e+1}\cdot \polylog (n,|\F|)\right)$ 
time algorithm that finds the unique polynomial (if it exists) of degree at most $d$ within distance $s$ from $f$, where 
$f:S^m\to\F\times [0,1]$ and $s$ can be written as $\frac{n^m}{2}\left(1-\frac{d + m\beta\sqrt{n} + e}{n}\right)$.  
We want to show that the theorem statement holds for $m+1$ variables.

\begin{algorithm}[H]
\caption{Decoding Reed Muller with Uncertainties\label{alg4}}

\begin{algorithmic}[1]
\State Input: $r:S^{m+1}\to\F\times [0,1]$.
\For{$j=0,1,\ldots, \frac{d}{e+1}$}
 \State Let $t_j=\frac{n-d+j\cdot (e+1) - \beta\sqrt{n}}{2}$.
 \State Define $r_{j\cdot (e+1)}:S^m\times S\to\F$ by 
 $$r_{j\cdot (e+1)}(\bm{X},Y) = r(\bm{X},Y) - \summ_{i=0}^{j\cdot (e+1)-1} Q_i(\bm{X})Y^{d-i}.$$
 \For{$\bm{x}\in S^m$}
  \State Define $r_{j\cdot (e+1),\bm{x}}:S\to\F$ by 
	$$r_{j\cdot (e+1),\bm{x}}(Y)= r_{j\cdot (e+1)}(\bm{x},Y).$$

	\State Define $D_{j,0,\bm{x}}(Y) = \text{FAST-RS-DECODER}(r_{j\cdot (e+1),\bm{x}}(Y),d-j\cdot (e+1),t_j)$.
	\State Define $\delta_{\bm{x}} = \Delta(D_{j,0,\bm{x}}(Y),r_{j\cdot (e+1),\bm{x}}(Y))$.
 \EndFor
	
 \For{$k=0,1,\ldots,e$}
  \For{$\bm{x}\in S^m$} 
	 \If{$\deg(D_{j,k,\bm{x}}(Y)) \leq d-j\cdot (e+1) - k$}
	  $$\sigma_{\bm{x}} \gets \text{Coeff}_{Y^{d-j\cdot (e+1) - k}}(D_{j,k,\bm{x}}(Y)).$$
	 \EndIf
	\EndFor
  
	 \State Define the weighted function $f_{j\cdot (e+1)+k}:S^m\to\F\times[0,1]$ by
	 $$f_{j\cdot (e+1)+k}(\bm{x})=\left(\sigma_{\bm{x}},\min\left\{1,\frac{\delta_{\bm{x}}}{(n-d+j\cdot (e+1) + k - \beta\sqrt{n} - e)/2}\right\}\right).$$
	\State Define $Q_{j\cdot (e+1)+k}:S^m\to \F$ by
	 $$Q_{j\cdot (e+1)+k}(\bm{X})= \text{RM-UNC-DECODER}\left(f_{j\cdot (e+1)+k}(\bm{X}),j\cdot (e+1)+k, \frac{n^m}{2}\left(1-\frac{j\cdot (e+1)+k+m\beta\sqrt{n}}{n-d+j\cdot (e+1)+k}\right)\right).$$
	
	\For{$\bm{x}\in S^m$}
	 \State Define $D_{j,k+1,\bm{x}}:S\to\F$ by 
	  $$D_{j,k+1,\bm{x}} = D_{j,k,\bm{x}} - Q_{j\cdot (e+1)+k}(\bm{x}) Y^{d-j\cdot (e+1) - k}.$$
	\EndFor
 \EndFor
\EndFor
\State Output: $\summ_{i=0}^{d} Q_i(\bm{X})Y^{d-i}$.

\end{algorithmic}
\end{algorithm}

The algorithm proceeds as follows: 
As before, we write $C(\bm{X},Y) = \summ_{i=0}^{d} {P_i(\bm{X}) Y^{d-i}}$, and find the $P_i$ iteratively.  
In the $i$-th iteration, decode row $r_{i,\bm{x}}$, $\bm{x}\in S^m$ to a degree $d-i$ polynomial within radius 
$\frac{1}{2}(n-d+i-\beta \sqrt{n} -e)$ to get $D_{i,\bm{x}}(Y)$.  
To reduce the number of times we decode, we will instead decode to the larger radius $\frac{1}{2}(n-d+i-\beta \sqrt{n})$ 
and use this decoding for $e+1$ iterations.
Construct the function $f_i:S^m\to \F\times [0,1]$ of (leading coefficient, uncertainty) 
$= \left(\text{Coeff}_{Y^{d-i}}(D_{i,\bm{x}}),\frac{\Delta(r_{i,\bm{x}},D_{i,\bm{x}})}{(n-d+i-\beta \sqrt{n}-e)/2}\right)$.  
Finally, decode $f_i(\bm{X})$ to a degree $i$ polynomial within radius $\frac{n^m}{2}\left(1-\frac{i+m\beta\sqrt{n}}{n-d+i}\right)$ 
to get $Q_i(\bm{X})$.

\medskip\noindent\textbf{Proof of Correctness}

We have to show $Q_i(\bm{X}) = P_i(\bm{X})$.  It is enough to show that 
$$\Delta(f_i,P_i) < \frac{n^m}{2}\left(1-\frac{i+m\beta\sqrt{n}}{n-d+i}\right) < \frac{n^m}{2}\left(1-\frac{i}{n}\right).$$  
Then $P_i$ will be the unique polynomial of degree $i$ within distance $\frac{n^m}{2}\left(1-\frac{i+m\beta\sqrt{n}}{n-d+i}\right)$ 
of $f_i$.  Since $Q_i$ is a polynomial of degree $i$ within distance $\frac{n^m}{2}\left(1-\frac{i+m\beta\sqrt{n}}{n-d+i}\right)$ 
of $f_i$, $Q_i$ must be equal to $P_i$.

When we decode $r_{i,\bm{x}}$ to radius $\frac{1}{2}(n-d+i-\beta \sqrt{n}-e)$, there are four possibilities:

\begin{enumerate}
\item
The decoding is unsuccessful.  In this case, we set $D_{i,\bm{x}}$ to be any polynomial of degree $n-d+i$ and set 
the uncertainty $u_i = 1$.  The contribution to $\Delta(f_i,P_i)$ is $\Delta(f_i(\bm{x}),P_i(\bm{x})) = 1/2$, 
which is bounded above by $\frac{1}{2}\frac{\Delta(r_{i,\bm{x}},C_{i,\bm{x}})}{(n-d+i-\beta \sqrt{n}-e)/2}$.

\item
The decoding succeeds and is correct.  In this case, $D_{i,\bm{x}} = C_{i,\bm{x}}$, so 
$\Delta(f_i(\bm{x}),P_i(\bm{x})) = \frac{1}{2}\frac{\Delta(r_{i,\bm{x}},C_{i,\bm{x}})}{(n-d+i-\beta \sqrt{n}-e)/2}$.

\item
The decoding succeeds, but is the wrong codeword, whose leading coefficient disagrees with that of the correct codeword.  
In this case, $D_{i,\bm{x}} \neq C_{i,\bm{x}}$, so 

\begin{eqnarray*}
\Delta(f_i(\bm{x}),P_i(\bm{x})) &=& 1-\frac{1}{2}\frac{\Delta(r_{i,\bm{x}},D_{i,\bm{x}})}{(n-d+i-\beta \sqrt{n}-e)/2}\\
&\leq& 1-\frac{(n-d+i)-\Delta(r_{i,\bm{x}},C_{i,\bm{x}})}{(n-d+i-\beta \sqrt{n}-e)}\\
&\leq& 1-\frac{(n-d+i-\beta \sqrt{n}-e)-\Delta(r_{i,\bm{x}},C_{i,\bm{x}})}{(n-d+i-\beta \sqrt{n}-e)}\\
&\leq& \frac{\Delta(r_{i,\bm{x}},C_{i,\bm{x}})}{(n-d+i-\beta \sqrt{n}-e)}.
\end{eqnarray*}

\item
The decoding succeeds, but is the wrong codeword, whose leading coefficient matches that of the correct codeword.  
As in the previous case, $D_{i,\bm{x}} \neq C_{i,\bm{x}}$, and we have:

\begin{eqnarray*}
\Delta(f_i(\bm{x}),P_i(\bm{x})) &=& \frac{1}{2}\frac{\Delta(r_{i,\bm{x}},D_{i,\bm{x}})}{(n-d+i-\beta \sqrt{n}-e)/2} \\
&\leq& 1-\frac{1}{2}\frac{\Delta(r_{i,\bm{x}},D_{i,\bm{x}})}{(n-d+i-\beta \sqrt{n}-e)/2} \\
&\leq& \frac{\Delta(r_{i,\bm{x}},C_{i,\bm{x}})}{(n-d+i-\beta \sqrt{n}-e)}.
\end{eqnarray*}

\end{enumerate}

Putting it all together, we have:

\begin{eqnarray*}
\Delta(f_i,P_i) &\leq& \summ_{\bm{x}\in S^m} {\frac{\Delta(r_{i,\bm{x}},C_{i,\bm{x}})}{n-d+i-\beta \sqrt{n}-e}}\\
&=& \frac{\Delta(r_i,C_i)}{n-d+i-\beta \sqrt{n}-e} \\
&=& \frac{\Delta(r,C)}{n-d+i-\beta \sqrt{n}-e} \\
&\leq& \frac{\frac{n^{m+1}}{2}\left(1-\frac{d + (m+1)\beta\sqrt{n} + e}{n}\right)}{n-d+i-\beta \sqrt{n}-e}\\
&=& \frac{n^m}{2} \frac{n-d-(m+1)\beta\sqrt{n}-e}{n-d+i-\beta \sqrt{n}-e}\\
&\leq& \frac{n^m}{2} \frac{n-d-m\beta\sqrt{n}}{n-d+i}\\
&=& \frac{n^m}{2} \left(1-\frac{i+m\beta\sqrt{n}}{n-d+i}\right).
\end{eqnarray*}

\medskip\noindent\textbf{Analysis of Runtime}

The algorithm can be divided into two parts:

\begin{enumerate}
\item Constructing the $f_i$, $i=0,\ldots, d$.
\item Decoding the $f_i$ to get the $P_i$, $i=0,\ldots, d$.
\end{enumerate}

The dominant contribution to the runtime when constructing $f_i$ comes from all the Reed-Solomon decodings with 
uncertainties we have to do to get the $D_{i,\bm{x}}(Y)$.  For every $e+1$ iterations, we have to decode each row $x\in S^m$ again.  
The total number of such decodings is given by $\frac{n}{e+1}\cdot n^m = \frac{n^{m+1}}{e+1}$.  
Since each Reed-Solomon decoding with uncertainty can be done in $O(n\polylog n)$ time via the FAST-RS-DECODER, we have that the 
runtime of this part of the algorithm is $O\left(\frac{n^{m+2}}{e+1}\polylog n\right)$.

To understand the runtime of the second part of the algorithm, we will compute the runtime of decoding $f_i$ for some fixed $i$.  
Note that decoding $f_i$ is a Reed-Muller decoding with uncertainties problem with $m$ variables.  
So we will write the decoding radius $\frac{n^m}{2} \left(1-\frac{i+m\beta\sqrt{n}}{n-d+i}\right)$ in the form 
$\frac{n^m}{2}\left(1-\frac{i + m\beta\sqrt{n} + e_i}{n}\right)$ and apply the induction hypothesis to get a 
$O\left(\frac{n^{m+1}}{e_i+1}\cdot \polylog n\right)$ runtime.  We now need to compute $e_i$:

\begin{eqnarray*}
e_i &=& n\frac{i+m\beta\sqrt{n}}{n-d+i} - (i + m\beta\sqrt{n})\\
&=& (i + m\beta\sqrt{n}) \left(\frac{n}{n-d+i}-1 \right)\\
&=& \frac{(i + m\beta\sqrt{n})(d-i)}{n-d+i}.
\end{eqnarray*}

The runtime for all $d+1$ iterations from $i=0,\ldots,d$ is then 
$$O\left(\summ_{i=0}^d \frac{1}{e_i+1} \cdot n^{m+1} \polylog n\right).$$

It remains to bound $\summ_{i=0}^d \frac{1}{e_i+1}$ from above:

\begin{eqnarray*}
&& \summ_{i=0}^d \frac{1}{e_i+1}\\
&\leq& \summ_{i=0}^d \min\left(1,\frac{1}{e_i}\right) \\
&\leq& 4+\summ_{i=2}^{d-2} \frac{1}{e_i} \\
&\leq& 4+\int_{1}^{d-1} \frac{n-d+t}{(t + m\beta\sqrt{n})(d-t)} dt.
\end{eqnarray*}

The last inequality is a simple Riemann sum bound using the fact that the function $\frac{n-d+t}{(t + m\beta\sqrt{n})(d-t)}$ 
decreases then increases continuously on $[1,d-1]$.  Computing the integral is a straightforward partial fraction decomposition:

\begin{eqnarray*}
&& \frac{n-d+t}{(t + m\beta\sqrt{n})(d-t)} \\
&=& \frac{n}{(t + m\beta\sqrt{n})(d-t)} - \frac{1}{t + m\beta\sqrt{n}} \\
&=& \frac{n}{d+m\beta\sqrt{n}} \left(\frac{1}{t+m\beta\sqrt{n}} + \frac{1}{d-t} \right) - \frac{1}{t+m\beta\sqrt{n}} \\
&\leq& \frac{1}{\alpha} \left(\frac{1}{t+m\beta\sqrt{n}} + \frac{1}{d-t} \right) - \frac{1}{t+m\beta\sqrt{n}} \\
&=& \left(\frac{1}{\alpha}-1\right) \frac{1}{t+m\beta\sqrt{n}} + \frac{1}{\alpha} \cdot \frac{1}{d-t}
\end{eqnarray*}

So we have:

\begin{eqnarray*}
&& \int_{1}^{d-1} \frac{n-d+t}{(t + m\beta\sqrt{n})(d-t)} dt \\
&\leq& \int_{1}^{d-1} \left[\left(\frac{1}{\alpha}-1\right) \frac{1}{t+m\beta\sqrt{n}} + \frac{1}{\alpha} \cdot \frac{1}{d-t} \right] dt \\
&\leq& O\left(\left(\frac{1}{\alpha}-1\right) \log n + \frac{1}{\alpha} \log n \right) \\
&=& O\left(\left(\frac{2}{\alpha}-1\right) \log n \right) \\
&=& O(\log n).
\end{eqnarray*}

So the runtime for all $d+1$ iterations is:
$$O\left((4+O(\log n)) \cdot n^{m+1} \polylog n\right) = O(n^{m+1} \polylog n).$$

This means the runtime for both parts of the algorithm is just $O\left(\frac{n^{m+2}}{e+1}\polylog n\right)$.

\medskip\noindent\textbf{Base Case}

The algorithm for $m=2$ is almost identical to that for general $m$, except that we decode $f_i(X)$ to a degree $i$ polynomial 
within the larger radius $\frac{n}{2}\left(1-\frac{i+\beta\sqrt{n}}{n}\right)$ to get $Q_i(\bm{X})$.  Note that this radius is 
still less than half the minimum distance of the Reed-Solomon code of degree $i$.  The correctness of the algorithm follows 
from the fact that $P_i$ is still the unique polynomial within distance $\frac{n}{2}\left(1-\frac{i+\beta\sqrt{n}}{n}\right)$ 
of $f_i$.  

We can again analyze the runtime of the two parts of the algorithm.  The runtime for finding the $f_i$ follows the same analysis 
as before and is $O(\frac{n^3}{e+1} \polylog n)$.  For decoding the $f_i$, we simply call the FAST-RS-DECODER for $d+1$ different values of $i$.  
This has a runtime of $O(dn \polylog n) \leq O(n^2 \polylog n)$.  So we get a total runtime of $O(\frac{n^3}{e+1} \polylog n)$.

\end{proof}

The algorithm for general Reed-Muller decoding follows the same strategy as the algorithm for Reed-Muller decoding with uncertainties to a 
radius less than half the minimum distance.  Recall that to get the $f_i$ in the algorithm, we only needed to Reed-Solomon decode to a radius 
significantly less than half 
the minimum distance.  We then saved on the number of Reed-Solomon decodings by instead decoding to half the minimum distance and reusing 
that decoding for many iterations.  We now want to Reed-Muller decode to near half the minimum distance.  Using the same algorithm doesn't 
save on enough Reed-Solomon decodings to achieve near linear time.  However, when there are no uncertainties in the original received word, 
we can list decode efficiently to a radius significantly larger than half the minimum distance.  We then use the lists for many iterations 
to generate the $f_i$ before list decoding again.

\begin{proof}[Proof of Theorem~\ref{GenRMD}]
In the case where the number of variables is $2$, we are in the setting of decoding bivariate Reed-Muller codes
to near half the minimum distance, which can be done in near-linear time by Theorem~\ref{BRMD}.  Assume now that 
$m\geq 2$ and that we have a Reed-Muller code in $m+1$ variables.

\begin{algorithm}[H]
\caption{Decoding Reed Muller\label{alg5}}

\begin{algorithmic}[1]
\State Input: $r:S^{m+1}\to\F$.
\State Let $c = ((1-\alpha)^2/8)$.
\For{$j=0,1,\ldots, \frac{d}{2cn}$}
 \State Let $t_j=\frac{n-d+j\cdot 2cn}{2} + cn$.
 \State Define $r_{j\cdot 2cn}:S^{m}\times S\to\F$ by 
 $$r_{j\cdot 2cn}(\bm{X},Y) = r(\bm{X},Y) - \summ_{i=0}^{j\cdot 2cn-1} Q_i(\bm{X})Y^{d-i}.$$
 \For{$\bm{x}\in S^{m}$}
  \State Define $r_{j\cdot 2cn,\bm{x}}:S\to\F$ by 
	$$r_{j\cdot 2cn,\bm{x}}(Y)= r_{j\cdot 2cn}(\bm{x},Y).$$

	\State Define $L_{j,0,\bm{x}} = \text{RS-LIST-DECODER}(r_{j\cdot 2cn,\bm{x}}(Y),d-j\cdot 2cn,t_j)$.
 \EndFor
	
 \For{$k=0,1,\ldots,2cn-1$}
  \For{$\bm{x}\in S^{m}$} 
	 \State Define $(G_{x}(Y),\delta_x)\in L_{j,k,x}$ to be the unique codeword (if any) with 
	 $$\delta_x < \frac{n-d+j\cdot 2cn + k}{2}$$
	 \State $\sigma_{x}\gets \text{Coeff}_{Y^{d-j\cdot 2cn - k}}(G_{x})$.
	\EndFor
  
	 \State Define the weighted function $f_{j\cdot 2cn+k}:S^m\to\F\times[0,1]$ by
	 $$f_{j\cdot 2cn+k}(\bm{x})=\left(\sigma_{\bm{x}},\min\left\{1,\frac{\delta_{\bm{x}}}{(n-d+j\cdot 2cn + k)/2}\right\}\right).$$
	\State Define $Q_{j\cdot 2cn+k}:S^m\to \F$ by
	 $$Q_{j\cdot 2cn+k}(\bm{X})= \text{RM-UNC-DECODER}\left(f_{j\cdot 2cn+k}(\bm{X}),j\cdot 2cn+k,\frac{n^{m-1}}{2}\left(1-\frac{j\cdot 2cn+k+(m-1)\beta\sqrt{n}}{n-d+j\cdot 2cn+k}\right)\right).$$
	
	\For{$\bm{x}\in S^{m}$}
	 \State $L_{j,k+1,\bm{x}}\gets \{(C-Q_{j\cdot 2cn+k}(\bm{x})Y^{d-j\cdot 2cn - k},\delta_{C,\bm{x}})| (C,\delta_{C,\bm{x}})\in L_{j,k,\bm{x}}, \text{Coeff}_{Y^{d-j\cdot 2cn - k}}(C)=Q_{j\cdot 2cn+k}(\bm{x})$\}.
	\EndFor
 \EndFor
\EndFor
\State Output: $\summ_{i=0}^{d} Q_i(\bm{X})Y^{d-i}$.

\end{algorithmic}
\end{algorithm}

The decoding algorithm for a $m+1$-variate Reed-Muller code is as follows: 
In the $i$-th iteration, list decode row $r_{i,\bm{x}}$, $\bm{x}\in S^m$ to obtain a list $L_{i,\bm{x}}$ of all degree 
$\leq d-i$ polynomials within radius $\frac{1}{2}(n-d+i + cn)$ along with their distances from $r_{i,\bm{x}}$, 
where $c = \frac{(1-\alpha)^2}{8}$.  
Search the list to get the degree $\leq d-i$ polynomial within distance $\frac{1}{2}(n-d+i)$ from $r_{i,\bm{x}}$, 
call it $D_{i,\bm{x}}(Y)$.  We use the lists for $cn$ iterations before list decoding again.
Construct function $f_i:S^m\to \F\times [0,1]$ of (leading coefficient, uncertainty) 
$= \left(\text{Coeff}_{Y^{d-i}}(D_{i,\bm{x}}),\frac{\Delta(r_{i,\bm{x}},D_{i,\bm{x}})}{(n-d+i)/2}\right)$.  
Decode $f_i(\bm{X})$ to a degree $i$ polynomial within radius $\frac{n^{m}}{2}\left(1-\frac{i+m\beta\sqrt{n}}{n-d+i}\right)$ 
to get $Q_i(\bm{X})$.

\medskip\noindent\textbf{Proof of Correctness}

As before, we want to show that $Q_i(\bm{X}) = P_i(\bm{X})$.  It is enough to show 
$$\Delta(f_i,P_i) < \frac{n^{m}}{2}\left(1-\frac{i+m\beta\sqrt{n}}{n-d+i}\right).$$ 

We can use a similar analysis of $\Delta(f_i,P_i)$ to the one in Theorem~\ref{UncRMD} to get to the following step:
$$\Delta(f_i,P_i) \leq \frac{\Delta(r,C)}{n-d+i}.$$

So we have:

\begin{eqnarray*}
\Delta(f_i,P_i) &\leq& \frac{\frac{n^{m+1}}{2}\left(1-\frac{d + m\beta\sqrt{n}}{n}\right)}{n-d+i}\\
&=& \frac{n^{m}}{2} \frac{n-d-m\beta\sqrt{n}}{n-d+i}\\
&=& \frac{n^{m}}{2}\left(1-\frac{i+m\beta\sqrt{n}}{n-d+i}\right).
\end{eqnarray*}

\medskip\noindent\textbf{Analysis of Runtime}

Decoding the $f_i$ over the $d+1$ values of $i$ can be done in $O(n^{m+1} \polylog n)$ following the same runtime analysis from 
Theorem~\ref{UncRMD}.  For constructing the $f_i$, we do $O(n^m)$ Reed-Solomon list decodings taking $O(n \polylog n)$ 
time each.  Within any given list, we need to compute uncertainties for each element of the list.  This also takes 
$O(n \polylog n)$ time for each list.  Finally, we update the lists at each iteration by identifying the elements with 
the correct leading coefficient and taking away their leading terms.  Since the list size is constant, and there are $O(n^{m})$ 
lists to update in each iteration, the updating takes $O(n^{m}d) = O(n^{m+1})$ over $d+1$ iterations.  Hence the total runtime is 
$O(n^{m+1} \polylog n)$ as desired.

\end{proof}

\section{Open Problems}
We conclude with some open problems.

\begin{enumerate}
\item The problem of list-decoding multivariate 
polynomial codes up to the Johnson radius is a very interesting open problem left open by our work. Generalizing our approach seems to require progress on another very interesting open problem, that of list-decoding Reed-Solomon concatenated codes. See~\cite{Guruswami-Sudan-soft-info} for the state of the art on this problem.

\item It would be interesting to understand the relationship between our algorithms and the $m+1$-variate interpolation-based list-decoding algorithm of Sudan~\cite{Sudan}. Their decoding radii are incomparable, and perhaps there is some insight into the polynomial method, which is known to face some difficulties in $> 2$ dimensions, that can be gained here.

\item It would be interesting to see if one can decode multiplicity codes~\cite{KSY} on arbitrary product sets upto half their minimum distance. Here too, we know algorithms that decode upto the minimum distance only in the case when $S$ is very algebraically special (from~\cite{mult-part2}), or if the degree $d$ is very small compared to $|S|$ (via an $m+1$-variate interpolation algorithm, similar to~\cite{Sudan}).

\end{enumerate}

\section*{Acknowledgments}

We are grateful to Madhu Sudan for introducing this problem to us many
years ago.

\titleformat{\section}{\Large\bfseries}{\appendixname~\thesection :}{0.5em}{}
\begin{appendices}

\appendix
\section{Near-Linear Time Soft Decoding of Reed-Solomon Codes}

In this section, we present a near-linear time algorithm to soft decode Reed-Solomon codes to almost 
half the minimum distance.  This result can be used to achieve near-linear time decoding of Reed-Muller 
codes to almost half the minimum distance.

\begin{lemma}
\label{SoftRS}
Let $\F$ be a finite field and let $S\subseteq \F$ be a nonempty subset of size $|S|=n$.  
There is a randomized algorithm $\text{FAST-RS-DECODER}(r,d)$ that given a received word 
with uncertainties $r:S\to\F\times[0,1]$, finds the unique polynomial (if it exists) 
$C\in\F[X]$ satisfying $\deg(C)\leq d$ and $\Delta(r,C) < \frac{n-d-\sqrt{n}}{2}$ with probability 
$3/4$ in time $O(n \polylog(n))$. 
\end{lemma}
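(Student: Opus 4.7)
The plan is to implement a probabilistic variant of Forney's Generalized Minimum Distance (GMD) decoder: instead of trying $\Theta(n)$ deterministic erasure thresholds, I would make a single independent randomized erasure decision at each position. Concretely, for every $i \in S$ with $r(i) = (\sigma_i, u_i)$, I would declare position $i$ an erasure with probability $u_i$ and otherwise present the hard symbol $\sigma_i$. I would then feed the resulting error/erasure instance into a near-linear-time Reed-Solomon error/erasure decoder (a fast Welch--Berlekamp implementation, which handles any instance with $2e + s < n - d$ in $O(n \polylog n)$ time), and declare success only if the returned polynomial $C'$ actually satisfies $\Delta(r,C') < (n-d-\sqrt{n})/2$.

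For correctness, I would fix the hypothetical target codeword $C$ and define $W_i \in \{0,1,2\}$ as the contribution of position $i$ to $W := 2e + s$: namely $W_i = 1$ if $i$ is erased, $W_i = 2$ if $i$ is kept but $\sigma_i \neq C(i)$, and $W_i = 0$ otherwise. A short case split on whether $\sigma_i = C(i)$ shows $\E[W_i] = 2\Delta(r(i), C(i))$ in both cases, so $\E[W] = 2\Delta(r,C) < n - d - \sqrt{n}$. The $W_i$ are independent and bounded in $[0,2]$, so Hoeffding's inequality gives
\[
\Pr\bigl[W \geq n-d\bigr] \;\leq\; \exp\!\left(-\frac{2(\sqrt{n})^{2}}{4n}\right) \;=\; e^{-1/2},
\]
so each trial yields a correct decoding with probability at least $1 - e^{-1/2} > 0.39$. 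By the triangle inequality for the weighted distance established in Section~2, the target $C$ is the \emph{unique} polynomial of degree $\leq d$ within weighted distance $(n-d-\sqrt{n})/2$ of $r$, so whenever the error/erasure decoder returns some $C'$ that passes the distance check, it must equal $C$. Running a constant number of independent trials and outputting the first candidate satisfying the distance check then amplifies the success probability above $3/4$.

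The runtime is then immediate: $O(1)$ invocations of the near-linear-time RS error/erasure decoder at $O(n \polylog n)$ each dominate the sampling and the distance checks, which are linear. The main obstacle I foresee is simply calibrating the slack---the gap between $\E[W] = 2\Delta(r,C)$ and the decoding threshold $n-d$ has to be $\Omega(\sqrt{n})$ for Hoeffding to deliver a constant success probability per trial, which matches precisely the $\sqrt{n}$ slack in the hypothesis $\Delta(r,C) < (n-d-\sqrt{n})/2$; any smaller gap would kill the concentration argument and any larger one would be wasteful. Apart from this calibration (and invoking an off-the-shelf $O(n\polylog n)$-time RS error/erasure decoder), the argument is a direct amalgam of Forney's classical GMD reasoning with a single application of Hoeffding's inequality.
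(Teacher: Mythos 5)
Your proposal is correct and follows essentially the same route as the paper: erase each position independently with probability $u_i$, run a near-linear-time errors-and-erasures Reed--Solomon decoder, note that success requires $2E+F<n-d$, and use $\E[2E+F]=2\Delta(r,C)<n-d-\sqrt{n}$ to conclude via a concentration bound. The only divergence is the tail estimate: you apply Hoeffding to the independent per-coordinate contributions $W_i\in[0,2]$, getting per-trial success probability $1-e^{-1/2}\approx 0.39$ and then amplifying with $O(1)$ verified repetitions (legitimate, since unique decodability of weighted received words makes the distance check sound), whereas the paper computes $\Var(2E+F)=\sum_i u_i(1-u_i)\le n/4$ (the covariance terms cancel) and gets probability $3/4$ from Chebyshev in a single run.
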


\begin{proof}
The near-linear time algorithm for $\text{FAST-RS-DECODER}(r,d)$ is based on 
Forney's generalized minimum distance decoding of concatenated codes.  

Given a received word $r:S\to \F\times [0,1]$, suppose there is a polynomial $f$ 
of degree at most $d$ such that $\Delta(f,r) < \frac{n-d-\sqrt{n}}{2}$.  Let 
$S = \{\alpha_1,\alpha_2,\ldots,\alpha_n\}$, and write $r(\alpha_i) = (\beta_i,u_i), i\in [n]$.  
We may view $r$ as a set of $n$ points $(\alpha_i,\beta_i)$ with uncertainties $u_i$.  
The general idea of the algorithm is to erase the $i$-th point with probability $u_i$, 
and perform errors and erasures decoding of the resulting Reed-Solomon code.  We denote 
the errors and erasures Reed-Solomon decoder by $\text{EE-DECODER}(r',d)$, which takes a 
received word $r':S\to \F\times[0,1]\cup \{?\}$ and a degree $d$ and returns the polynomial 
of degree at most $d$ that is within $\frac{n-d}{2}$ of $r'$.

\begin{algorithm}[H]
\caption{Fast Reed-Solomon Decoding with Uncertainties\label{alg3}}

\begin{algorithmic}[1]
\State Input: $r:S\to\F\times[0,1]$.
\For{$i = 1,2,\ldots, n$}
 \State $p_i\gets \text{RANDOM}([0,1])$.
 \State Define $r':S\to\F\times[0,1]\cup\{?\}$ by 
  $$r'(\alpha_i) = \begin{cases} \beta_i & \text{if } p_i \leq u_i \\ ? & \text{if } p_i > u_i \end{cases} .$$
\EndFor
\State $g\gets\text{EE-DECODER}(r',d)$.
\State Output: $g$.
\end{algorithmic}
\end{algorithm}

We say that a point is an $\emph{erasure}$ if it is erased by the algorithm.  We say that 
a point $(\alpha_i,\beta_i)$ is an $\emph{error}$ if $(\alpha,\beta)$ is not an erasure and 
$f(\alpha_i)\neq \beta_i$.  Let $E$ be the number of errors, and let $F$ be the number of 
erasures.  As the resulting $n-F$ points form a Reed-Solomon code of block length $n-F$ and degree $d$, 
the algorithm outputs $f$ as long as 

$$2E+F < n-d.$$

We will use Chebyshev's inequality to show that $2E+F < n-d$ with probability at least $\frac{3}{4}$.  
To help us compute the expectation and variance of $2E+F$, we write $E$ and $F$ as a sum of indicator 
random variables.  Let $A = \{i\in[n]| f(\alpha_i)= \beta_i\}$ be the set of agreeing indices, and 
let $D = \{i\in[n]| f(\alpha_i)\neq \beta_i\}$ be the set of disagreeing indices.  
Let $T = \{i\in[n]| (\alpha_i,\beta_i) \text{ is erased}\}$ be the set of erasure indices.

Then we can write
\begin{eqnarray*}
E &=& \summ_{i\in D}{1_{i\notin T}}\\
F &=& \summ_{i\in [n]}{1_{i\in T}}.
\end{eqnarray*}

We then can show $\E[2E+F]$ is less than $n-d$ by a significant amount $\sqrt{n}$: 
\begin{eqnarray*}
\E[2E+F] &=& 2\summ_{i\in D}{(1-u_i)} + \summ_{i\in [n]}{u_i}\\
&=& 2\summ_{i\in D}{(1-u_i)} + \summ_{i\in D}{u_i} + \summ_{i\in A}{u_i}\\
&=& 2\left(\summ_{i\in D}{\left(1-\frac{u_i}{2}\right)} + \summ_{i\in A}{\frac{u_i}{2}}\right)\\
&=& 2\Delta(f,r)\\
&<& n-d-\sqrt{n}.
\end{eqnarray*}

Finally, we show that $\Var(2E+F)$ is small:
\begin{eqnarray*}
&&\Var(2E+F)\\
&=& 4\Var(E) + 4\Cov(E,F) + \Var(F)\\
&=& 4\summ_{i\in D}{u_i(1-u_i)} + 4\left(\E\left[\summ_{i\in D}{\summ_{j\in[n]}{1_{i\notin T \cap j\in T}}}\right] 
- \summ_{i\in D}{(1-u_i)}\summ_{j\in [n]}{u_j}\right) + \summ_{i\in[n]}{u_i(1-u_i)}\\
&=& 4\summ_{i\in D}{u_i(1-u_i)} + 4\left(\E\left[\summ_{i\in D}{\summ_{j\neq i}{(1-u_i)u_j}}\right] 
- \summ_{i\in D}\summ_{j\in [n]}{(1-u_i)u_j}\right) + \summ_{i\in[n]}{u_i(1-u_i)}\\
&=& 4\summ_{i\in D}{u_i(1-u_i)} - 4\summ_{i\in D}{u_i(1-u_i)} + \summ_{i\in[n]}{u_i(1-u_i)}\\
&=& \summ_{i\in[n]}{u_i(1-u_i)}\\
&\leq& \frac{n}{4}.
\end{eqnarray*}

By Chebyshev's inequality, $\Pr(2E+F\geq n-d) \leq \frac{1}{4}$.  Hence we have $\Pr(2E+F < n-d) \geq \frac{3}{4}$.  
That is, with probability at least $\frac{3}{4}$, the algorithm outputs $f$.  

We now analyze the runtime of our fast Reed-Solomon decoder.  
The erasures can be done in $O(n)$ time.  
Also, as the EE-DECODER is essentially a Reed-Solomon decoder to half the minimum distance, it runs in time 
$O(n \polylog n)$~\cite{Berl,WB}.  This gives a total runtime of $O(n \polylog n)$.

\end{proof}

Note that by repeating the algorithm $\Omega(\log n)$ times, 
we find the unique codeword in $O(n \polylog n)$ time with probability $1-1/n^{\Omega(1)}$ (the $\Omega(1)$ can be chosen to 
be an arbitrary constant).

\end{appendices}



\end{document}